\documentclass[preprint,12pt]{elsarticle}
\usepackage{etoolbox}
\makeatletter
\def\ps@pprintTitle{%
	\let\@oddhead\@empty
	\let\@evenhead\@empty
	\def\@oddfoot{\reset@font\hfil\thepage\hfil}
	\let\@evenfoot\@oddfoot
}
\makeatother
\usepackage{graphics}
\usepackage[table,xcdraw]{xcolor}
\usepackage{amsmath, amsthm, amssymb}
\usepackage{natbib}
\usepackage[colorlinks=true]{hyperref}
\usepackage{lscape}
\usepackage[utf8]{inputenc} 
\usepackage{amsthm}

\usepackage{enumerate}
\usepackage{mathrsfs}
\usepackage{setspace}
\usepackage{multirow}
\usepackage{graphicx}
\usepackage{amsfonts}
\usepackage{verbatim}
\usepackage{amssymb}
\usepackage{amsthm}
\usepackage{diagbox}
\usepackage{booktabs}
\usepackage{array}
\usepackage{multirow,bigstrut,bigdelim}
\newtheorem{thm}{Theorem}[section]

\theoremstyle{plain}
\newtheorem{defi}{Definition}[section]
\newtheorem{ex}{Example}[section]

\theoremstyle{remark}

\numberwithin{equation}{section}
\usepackage[mathscr]{euscript}
\usepackage{geometry} 
\usepackage{graphicx,lscape} 
\usepackage{booktabs}
\usepackage{array}
\usepackage{paralist} 
\usepackage{verbatim}
\usepackage{subfig} 
\usepackage{orcidlink}
\biboptions{comma,round,authoryear}
   
\begin{document}
\begin{frontmatter}
		\title{\textbf{A Conditional Quantile Approach to Vector-Valued Bivariate Lorenz Surfaces: Properties and Applications}}		
       \author{Shifna, P. R \orcidlink{0009-0002-4240-6941}\corref{cor1}}
	\ead{shifnarazack92@gmail.com}
	\author{S. M. Sunoj \orcidlink{0000-0002-6227-1506}}
	\ead{smsunoj@cusat.ac.in}

		
\address{Department of Statistics\\Cochin University of Science and Technology\\ Cochin 682 022, Kerala, INDIA.}

\begin{abstract}
The Lorenz curve is a fundamental tool for measuring inequality, but its extension to multivariate settings remains challenging due to the complex dependence structure among variables and the need to capture directional aspects of inequality. In this paper, we introduce a novel vector-valued bivariate Lorenz surface (VBLS) based on conditional distributions and conditional quantile functions. Unlike existing symmetric bivariate Lorenz surfaces, the proposed VBLS effectively captures the directional inequality arising from the conditional dependence between two variables. We establish several fundamental properties of the proposed surface and investigate its mathematical properties. The corresponding egalitarian surface is defined, leading to the development of associated vector-valued bivariate Gini measures for quantifying inequality. We further derive characterization results that demonstrate the uniqueness of the proposed VBLS within the underlying distributional framework. Nonparametric estimators of the VBLS are developed and their finite-sample performance is evaluated through a simulation study. The usefulness of the proposed methodology is also illustrated with applications to income inequality and actuarial data.
\end{abstract}
\begin{keyword} 
Lorenz curve, conditional quantile function, Gini index, nonparametric estimation.	
\MSC[2020] 62H99
\end{keyword}	
	\end{frontmatter}
\section{Introduction}
  
The Lorenz curve is one of the most important graphical and analytical tools for studying concentration, inequality and variability in non-negative distributions.   The Lorenz curve provides a cumulative representation of a distribution and serves as the basis for a variety of inequality measures, most notably the Gini index. Owing to its close connections with quantile functions, majorization theory, stochastic orders and concentration measures, the Lorenz curve has become a central object in both theoretical and applied statistics.
\par The Lorenz curve represents the distribution of income by plotting the cumulative percentage of total income received by the bottom $x\%$ of the population against the cumulative percentage of the population ranked by income. A perfectly equal distribution is represented by the 45-degree line (also known as the egalitarian line), in which each segment of the population receives an equal share of total income. For a finite population of $n$ individuals with ordered incomes $x_{1:n},x_{2:n},\ldots,x_{n:n}$, \citet{lorenz1905methods} defined the Lorenz curve  as
$$
L\!\left(\frac{i}{n}\right)
=
\frac{\sum_{j=1}^{i}x_{j:n}}
{\sum_{j=1}^{n}x_{j:n}},
$$
where the points $\left(\frac{i}{n},L\!\left(\frac{i}{n}\right)\right)$ are linearly interpolated to complete the curve. While this definition is discrete, \citet{gastwirth1971general} extended it to the continuous case using the quantile function. For a non-negative continuous random variable $X$ with cumulative distribution function $F (x) = P (X \leq x)$ and finite mean $\mu = E(X) = \int_{\mathcal{R}^+} {x\,dF(x)},$ the Lorenz curve is given by
$$
L(u)=\mu^{-1}\int_{0}^{u}Q(p)\,dp,
$$
where $Q(u)=F^{-1}(u)=\inf\{x:F(x)\ge u\},
\, 0\le u\le 1$ is the quantile function corresponding to $X$.  The Lorenz curve is a non-decreasing convex function lying below the 45-degree line, and its curvature reflects the degree of inequality. A popular measure derived from the Lorenz curve is the Gini index,
$$
G = 2\int_{0}^{1}[u-L(u)]\,du = 1-2\int_{0}^{1}L(u)\,du,
$$
which measures the area between the Lorenz curve and the egalitarian line.
 \par Although originally developed for the study of income distributions, the Lorenz curve has found applications far beyond economics. In actuarial science and insurance, Lorenz curves and Gini indices are widely used to evaluate risk classification systems, premium adequacy, claim concentration  and predictive performance of rate making models . Applications have also appeared in public health for measuring inequalities in health outcomes and health care utilization, in criminology for analysing crime concentration and in ecology for quantifying species abundance and resource allocation. See \citet{frees2011summarizing, frees2014insurance}, \citet{denuit2019model}, \citet{mauguen2016using}, \citet{christopoulos2017lorenz}, \citet{johnson2010brief}, \citet{bernasco2017more}, \citet{weiner1984meaning}, \citet{damgaard2000describing} for details. The broad applicability of Lorenz type measures stems from their ability to summarize how a total quantity is distributed across a population while preserving important distributional information that cannot be captured by simple location and dispersion measures. 
\par Many real-world problems involve several variables simultaneously. To address this complexity, researchers have proposed multivariate extensions of the Lorenz curve based on copulas, marginal quantiles, zonoids, optimal transport theory, etc. Some of these attempts include \citet{taguchi1972a, taguchi1972b}, \citet{arnold2014pareto, arnold1987majorization}, \citet{koshevoy1996lorenz}, \citet{sarabia2020lorenz}, \citet{grothe2022multivariate}, \citet{fan2022lorenz}, \citet{yordanov2025iterated}, representing certain important developments of Lorenz curve in multivariate framework. In recent years, substantial progress has also been made in extending quantile concepts to multivariate settings. The bivariate quantile framework developed by \citet{vineshkumar2019bivariate}, \citet{unnikrishnan2021properties}, \citet{pr2024multivariate}  and \citet{nair2025bivariate} provides a complete characterization of a bivariate distribution through marginal and conditional quantile functions. These developments offer a natural foundation for extending classical quantile-based tools, including Lorenz curves, to higher dimensions while retaining their probabilistic interpretation and analytical tractability. Despite these advances, most existing bivariate Lorenz curves treat the variables symmetrically, therefore, do not explicitly capture how inequality in one variable is distributed within subpopulations determined by another variable. This kind of directional information is often of substantial practical importance.  

\par Motivated by these considerations, the present study introduces a new vector-valued bivariate Lorenz surface (VBLS) based on conditional distributions and conditional quantile functions. The VBLS captures directional inequality that remains hidden under traditional symmetric approaches. The paper is organized as follows. Section 2 introduces the proposed vector-valued bivariate Lorenz surface and illustrates its examples. Section 3 establishes the main theoretical properties of the VBLS. The egalitarian surface and associated bivariate Gini measures are discussed in Section 4. Section 5 discusses the characterization of the VBLS. Estimation procedures  and simulation studies are presented in Section 6. Finally, Section 7 concludes the paper with the application of VBLS in two disciplines economics and actuarial science.

\section{Vector-valued bivariate  Lorenz surface}
Let $(X_1, X_2)$ be a non-negative absolutely continuous random vector with finite means. Let $F(x_1,x_2) = P(X_1 \leq x_1, X_2\leq x_2)$ be the joint distribution function, $\bar{F}(x_1,x_2)$ be the  joint survival function and marginal survival (distribution) functions, $\bar F_1 (x_1)(F_1 (x_1))$ and $\bar F_2(x_2)(F_2(x_2))$, respectively. Then we have 
\[\begin{aligned}
    \bar F(x_1,x_2)&=\bar F_1(x_1)\bar F_{21}(x_2\mid x_1),
\end{aligned}
\]
where $\bar F_{21}(x_2| x_1) = P(X_2 > x_2 \mid X_1 > x_1)$ is the conditional survival function of $X_2|X_1>x_1$. Correspondingly \citet{vineshkumar2019bivariate,unnikrishnan2021properties} defined the bivariate quantile function as $Q(u_1,u_2)=(Q_1(u_1),Q_{21}(u_1,u_2)),$ where $Q_1(u_1)=\inf\{x_1 : F_{1} (x_1) \ge u_1\}$ and $Q_{21}(u_1, u_2)=\inf\{x_2 : F_{21} (x_2,Q_1(u_1)) \ge u_2\}$. The joint distribution of $(X_1,X_2)$ can also be  characterized through the bivariate quantile function  given by $(Q_1(u_1), Q_{21}(u_1, u_2))$ or equivalently $(Q_{12}(u_1, u_2), Q_2(u_2))$. Moreover, the conditional quantile functions reduce to the corresponding marginal quantile functions as the conditioning level approaches zero, that is,
$\lim_{u_1\to0}Q_{21}(u_1,u_2)=Q_2(u_2),\,
\lim_{u_2\to0}Q_{12}(u_1,u_2)=Q_1(u_1).$
Motivated by the above conditional representation, for each fixed $u_2 \in [0,1)$, define
\[
F_{12}(x_1; u_2) = P\big(X_1 \le x_1 \mid X_2 > Q_2(u_2)\big),
\]
with mean
\[
\mu_{12}(u_2) = \int x_1 \, dF_{12}(x_1; u_2).
\]
Similarly, for each fixed $u_1 \in [0,1)$ define $F_{21}$ and $\mu_{21}$. Thus the quantile functions corresponding to the two conditional distributions are given by $Q_{12}$ and $Q_{21}$ respectively.
\begin{defi}
   The vector-valued bivariate Lorenz surface  (VBLS) for a random vector $(X_1, X_2)$ is defined as \[
\mathbf{L}(u_1, u_2) = \big( L_{12}(u_1, u_2), \, L_{21}(u_1, u_2) \big).
\] where\begin{equation}\label{lorenzdistribution_v}
\begin{aligned}
   & L_{12}(u_1, u_2) = \frac{1}{\mu_{12}(u_
   2)} 
\int_0^{Q_{12}(u_1, u_2)} x_1 \, dF_{12}(x_1, u_2),\\
&L_{21}(u_1, u_2) = \frac{1}{\mu_{21}(u_1)} 
\int_0^{Q_{21}(u_1, u_2)} x_2 \, dF_{21}(x_2, u_1).
\end{aligned}
\end{equation}
\end{defi}

The quantity $L_{12}(u_1,u_2)$ measures the cumulative share of $X_1$ by the lowest proportion $u_1$ of individuals within the subgroup whose $X_2$ exceeds its $u_2$-quantile. Likewise, $L_{21}(u_1,u_2)$ measures the cumulative share of $X_2$ contributed by the lowest proportion $u_2$ of individuals within the subgroup whose $X_1$ exceeds its $u_1$-quantile. Thus, each component of the VBLS is a Lorenz curve associated with a conditional distribution, quantifying the inequality of one variable within subpopulations determined by exceedance levels of the other variable.
\begin{ex}\label{eg_pareto}
Consider a bivariate Pareto distribution with survival function,
  \begin{equation}\label{paretosurvival}
      \bar F(x_1,x_2)=\left(\frac{x_1}{\theta_1}+\frac{x_2}{\theta_2}-1\right)^{-c},\quad x_1>\theta_1>0,\ x_2>\theta_2>0, \ c>1.
  \end{equation}
The marginal survival functions are given by 
$$
\bar F_i(x_i)=\left(\frac{x_i}{\theta_i}\right)^{-c}=1-u_i, \quad i=1,2$$
and 
$$\begin{aligned}
 & \bar F_{12}(x_1;u_2)=\frac{\bar F(x_1,x_2)}{\bar F_2(Q_2)}=\frac{\frac{x_1}{\theta_1}+\left( (1 - u_2)^{-\frac{1}{c}}   - 1 \right)^{-c}}{1 - u_2}, \\
 &\bar F_{21}(x_2;u_1)=\frac{\bar F(x_1,x_2)}{\bar F_1(Q_1)}=\frac{\left( (1 - u_1)^{-\frac{1}{c}} + \frac{x_2}{\theta_2} - 1 \right)^{-c}}{1 - u_1}.
\end{aligned}$$
Corresponding quantile functions are obtained as
\begin{equation*}
   Q_i(u_i) = (1-u_i)^{-1/c}\theta_i \quad i=1,2, 
\end{equation*}
and \begin{equation*}
\begin{aligned}
     &Q_{12}(u_1,u_2)=\left((1 -u_2)^{-1/
    c}\left((1 -u_1)^{-1/
    c} -1\right)+1 \right) \theta_1. \\
    &Q_{21}(u_1,u_2)=\left((1 -u_1)^{-1/
    c}\left((1 -u_2)^{-1/
    c} -1\right) +1\right) \theta_2. 
\end{aligned}
\end{equation*} 
\begin{equation*}
 \mu_{12}(u_2)= \frac{\left(\left(c - 1\right) \left(1 - u_{2}\right)^{\frac{1}{c}} + 1\right) {\theta}_{1}}{\left(c - 1\right) \left(1 - u_{2}\right)^{\frac{1}{c}}}  
\end{equation*}
\begin{equation*}
\begin{aligned}
    L_{12}(u_1,u_2)=&\frac{1}{\mu_{12}(u_2)}\int_{\theta_1}^{Q_{12}}x_1dF_{12}(x_1;u_2)\\
    =&\frac{
c(-1+(1-u_1)^{-1/c})(-1+u_1)+u_1-u_1(1-u_2)^{-1/c}+cu_1(1-u_2)^{-1/c}}{
1+(c-1)(1-u_2)^{-1/c}}.
\end{aligned}
\end{equation*}
Similarly
$$L_{21}(u_1,u_2)=\frac{
c(-1+(1-u_2)^{-1/c})(-1+u_2)+u_2-u_2(1-u_1)^{-1/c}+cu_2(1-u_1)^{-1/c}}{1+(c-1)(1-u_1)^{-1/c}}.$$
The figure \ref{pareto3dl12} and \ref{pareto3dl21} show the bivariate Lorenz surface for the Pareto distribution for $c=3$.
\begin{figure}[ht]
    \centering
    \includegraphics[scale=0.4]{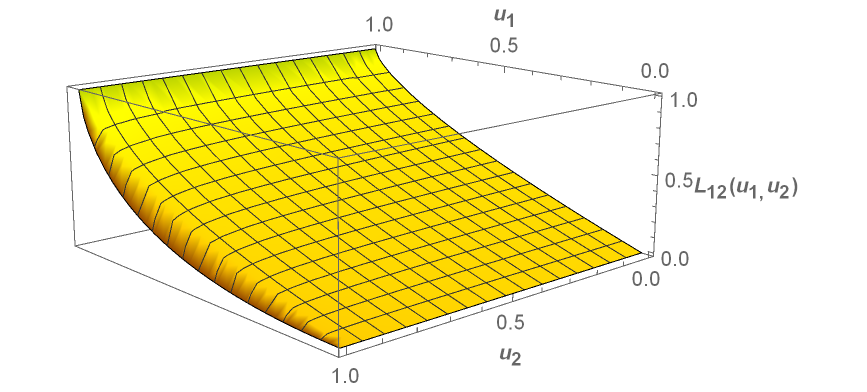}
    \caption{Plot of $L_{12}(u_1,u_2)$ for bivariate Pareto distribution}
    \label{pareto3dl12}
\end{figure}
\begin{figure}[ht]
    \centering
    \includegraphics[scale=0.4]{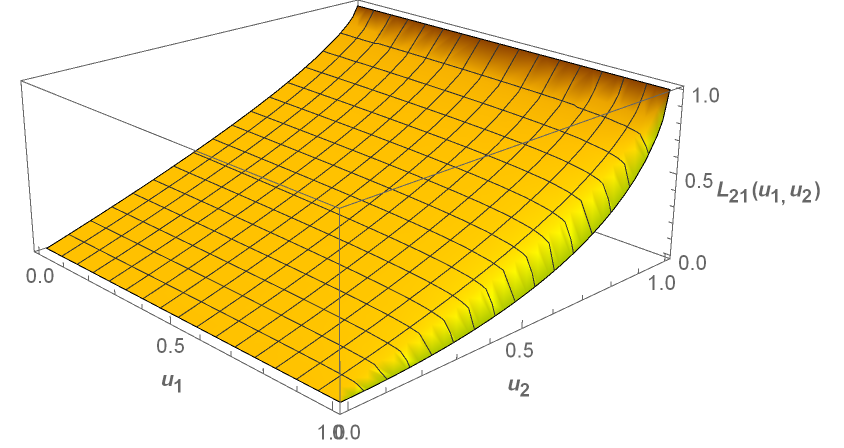}
    \caption{Plot of $L_{21}(u_1,u_2)$ for bivariate Pareto distribution}
    \label{pareto3dl21}
\end{figure}
\end{ex}
\par We now introduce an alternative formulation of the vector-valued bivariate Lorenz surface (VBLS) based on quantile functions. This representation is particularly useful in situations where the underlying distribution functions are not available in closed form, thereby offering a more analytically tractable approach. Moreover, the quantile-based framework accommodates a wide class of random variables, including discrete, continuous, and mixed types, making it suitable for applications in higher-dimensional settings. Applying the transformation $p=F_{12}$ in \eqref{lorenzdistribution_v},
we have $x_1=Q_{12}(p,u_2), \,
dF_{12}(x_1;u_2)=dp.$
Since $F_{12}(0;u_2)=0$ and $F_{12}\!\left(Q_{12}(u_1,u_2);u_2\right)=u_1,$
it follows that $L_{12}(u_1,u_2)
=\frac{1}{\mu_{12}(u_2)}
\int_{0}^{u_1}
Q_{12}(p,u_2)\,dp$. Similarly,  we obtain $L_{21}$.

\begin{defi}
   For a non-negative absolutely continuous random vector $(X_1,X_2)$ with finite means vector-valued bivariate Lorenz surface based on quantile functions is defined as
   \[
\mathbf{L}(u_1,u_2) = \bigl(L_{12}(u_1,u_2), L_{21}(u_1,u_2)\bigr),\quad 0\leq u_1,u_2<1
\]
where
\begin{equation}\label{lorenzquantile_v}
\begin{aligned}
   & L_{12}(u_1, u_2) = \frac{1}{\mu_{12}(u_2)} \int_0^{u_1} Q_{12}(p_, u_2)\, dp,\\
&L_{21}(u_1, u_2) = \frac{1}{\mu_{21}(u_1)} \int_0^{u_2} Q_{21}(u_1, p)\, dp.
\end{aligned}   
\end{equation}
\end{defi}
\begin{ex}
Consider a bivariate linear hazard function quantile model given in \citet{vineshkumar2019bivariate} with quantile functions
\[
Q_1(u_1)
=
\frac{1}{a+b}
\log\left(\frac{a + b u_1}{a(1-u_1)}\right),
\]
\[
Q_2(u_2)
=
\frac{1}{a+c}
\log\left(\frac{a + c u_2}{a(1-u_2)}\right),
\]
\[
Q_{12}(u_1,u_2)
=
\frac{1}{a+b+c u_2}
\log\left(
\frac{a + b u_1 + c u_2}{(a + c u_2)(1-u_1)}
\right),
\]
\[
Q_{21}(u_1,u_2)
=
\frac{1}{a+b u_1+c}
\log\left(
\frac{a + b u_1 + c u_2}{(a + b u_1)(1-u_2)}
\right).
\]
where $a>0,\; a+b>0,\; a+c>0$,\;$ a+b+c>0$ and $u_1,u_2 \in (0,1)$.  Then 
\[ \mu_{12}(u_2)
=\int_0^1 Q_{12}(p,u_2)\,dp=\frac{\log(1+\frac{b}{a+cu_2})}{b}.
\]
The corresponding component of the vector-valued bivariate Lorenz surface is given by
\[
\begin{aligned}
    L_{12}(u_1,u_2)=&
\frac{1}{\mu_{12}(u_2)}
\int_0^{u_1} Q_{12}(p,u_2)\,dp\\
=&\frac{(b-bu_1)\log(1-u_1)-(a+bu_1+cu_2)(\log(a+cu_2)-\log(a+bu_1+cu_2))}{(a+b+cu_2)\log(1+\frac{b}{a+cu_2})}.
\end{aligned}
\] Similarly \[
L_{21}(u_1,u_2)
=\frac{(c-cu_2)\log(1-u_2)-(a+bu_1+cu_2)(\log(a+bu_1)-\log(a+bu_1+cu_2))}{(a+c+bu_1)\log(1+\frac{c}{a+bu_1})}.
\]
The plots of VBLS for the bivariate linear hazard model for $a=b=c=1$ are shown in Figures \ref{l12linear} and \ref{l21linear}. 
\begin{figure}[ht]
    \centering
    \includegraphics[scale=0.4]{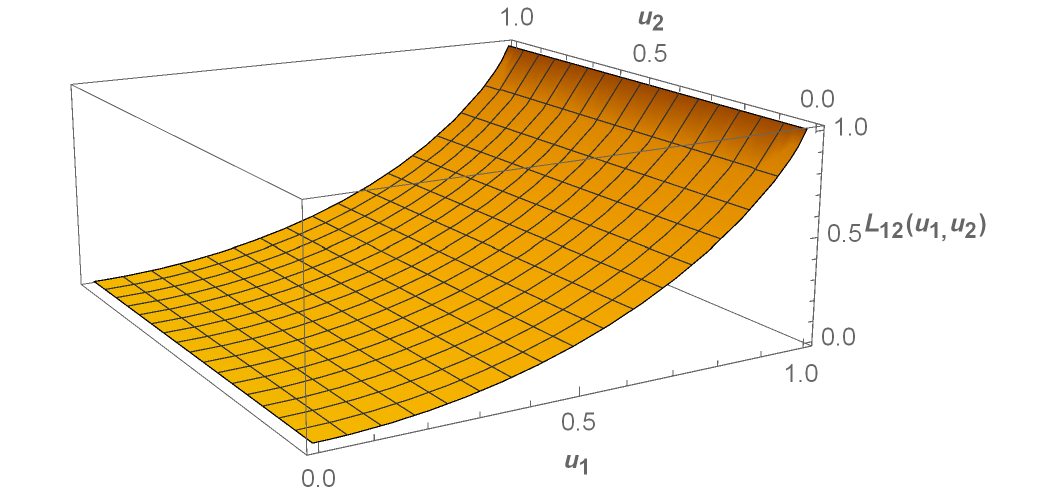}
    \caption{Graphical representation for $L_{12}(u_1,u_2)$ for bivariate linear hazard quantile model}
    \label{l12linear}
\end{figure}
\begin{figure}[ht]
    \centering
    \includegraphics[scale=0.4]{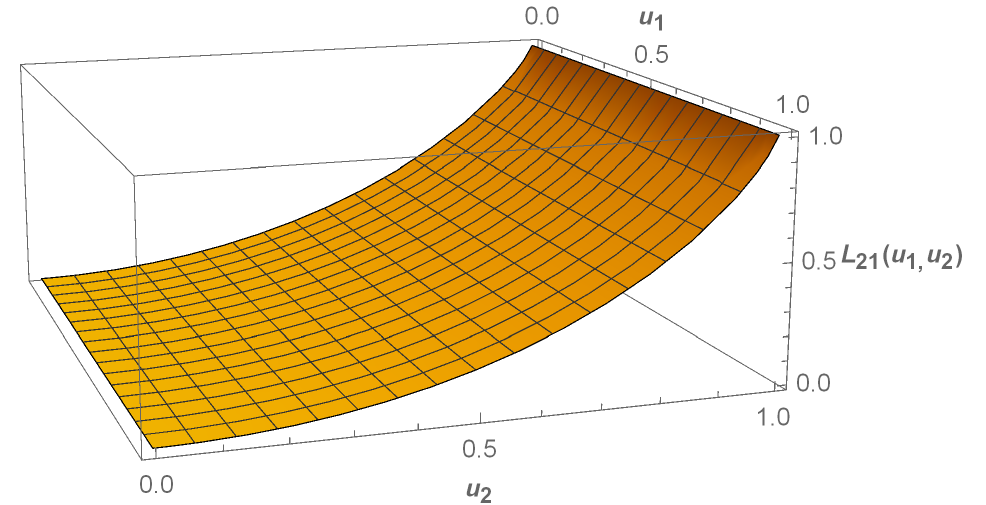}
    \caption{Graphical representation for $L_{21}(u_1,u_2)$ for  bivariate linear hazard quantile model}
    \label{l21linear} 
    \end{figure}

\end{ex}
To verify the equivalence of the distribution based and quantile based formulations of the VBLS, we illustrate using the following example the consistency through a bivariate model for which both the distribution functions and the corresponding quantile functions are available in closed form.
\begin{ex}
Consider the bivariate model with joint survival function 
$$\bar{F}(x_1,x_2)=(1+a_1x_1+a_2x_2)^{-p},\quad x_1,x_2>0,\,a_1,a_2>0,\,p>1.$$
We have $\bar{F}_i(x_i)=(1+a_ix_i)^{-p}$ and $Q_i(u_i)=\frac{(1-u_i)^{-1/p}-1}{a_i}$ for $i=1,2$. The conditional survival function
$$\bar F_{12}(x_1;u_2)=\frac{(a_1x_1+(1-u_2)^{-1/p})^{-p}}{(1-u_2)}$$ and $$f_{12}(x_1;u_2)=-\frac{\partial \bar F_{12}(x_1;u_2)}{\partial x_1}=\frac{pa_1(a_1x_1+(1-u_2)^{-1/p})^{-1-p}}{(1-u_2)}.$$ Thus
\[
Q_{12}(u_1, u_2) = (1-u_2)^{-1/p}\frac{((1-u_1)^{-1/p}-1) }{a_1}.
\]
Similarly
\[Q_{21}(u_1, u_2) =  (1-u_1)^{-1/p}\frac{((1-u_2)^{-1/p}-1) }{a_2}.
\]
We have $\mu_{12}(u_2)=\frac{1}{a_1(p-1)(1-u_2)^{1/p}}$
and hence using \eqref{lorenzdistribution_v}, we obtain
$$
    L_{12}(u_1,u_2)=\frac{1}{\mu_{12}(u_2)}\int_0^{Q_{12}}x_1f_{12}(x_1;u_2)dx_1 =p-(p-1)u_1-p(1-u_1)^{\frac{p-1}{p}}.
$$
Similarly
\[
L_{21}(u_1, u_2) 
= p-(p-1)u_2-p(1-u_2)^{\frac{p-1}{p}}.
\]
Using the definition of VBLS based on quantile function \eqref{lorenzquantile_v}, we have
$$
    L_{12}(u_1,u_2) = \frac{1}{\mu_{12}(u_2)}\int_0^{u_1}Q_{12}(p,u_2)dp = p-(p-1)u_1-p(1-u_1)^{\frac{p-1}{p}}.
$$
Similarly
\[
L_{21}(u_1, u_2) 
= p-(p-1)u_2-p(1-u_2)^{\frac{p-1}{p}}.
\]
This confirms the equivalence of the VBLS based on the distribution-based and quantile-based formulations, thereby illustrating the equivalence of their definitions. 
\end{ex}
\section{Properties of VBLS}
\begin{thm}
Let $(X_1,X_2)$ be a non-negative absolutely continuous 
random vector with finite means. Then the components of 
the VBLS satisfy the following boundary conditions for 
all $u_1, u_2 \in [0,1)$:
\begin{enumerate}[(i)]
    \item $L_{12}(0,u_2) = 0$ \text{ and } 
          $L_{21}(u_1,0) = 0$;
       \item $L_{12}(0,0) = L_{21}(0,0) = 0$;    
    
    \item $L_{12}(u_1,0) = L_1(u_1)$ and 
          $L_{21}(0,u_2) = L_2(u_2)$, where $L_1$ 
          and $L_2$ denote the univariate Lorenz curves 
          of $X_1$ and $X_2$ respectively;
          \item $\lim_{u_1\to 1^-} L_{12}(u_1,u_2) = 1$ 
          \text{ and } 
          $\lim_{u_2\to 1^-} L_{21}(u_1,u_2) = 1$;
   
    \item $\lim_{(u_1,u_2)\to(1,1)}
L_{12}(u_1,u_2)=1,$ and 
          $\lim_{(u_1,u_2)\to(1,1)}
L_{21}(u_1,u_2)=1$.
\end{enumerate}
\end{thm}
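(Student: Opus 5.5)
The plan is to work throughout with the quantile form \eqref{lorenzquantile_v}, where $\mu_{12}(u_2)=\int_0^1 Q_{12}(p,u_2)\,dp$ and $\mu_{21}(u_1)=\int_0^1 Q_{21}(u_1,p)\,dp$. The change of variables $p=F_{12}(x_1;u_2)$ given before that definition shows these agree with the distributional means. By symmetry of the two components I only treat $L_{12}$; every step for $L_{21}$ is the same with the roles of $u_1,u_2$ interchanged.

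\emph{Parts (i) and (ii).} These are immediate: the integral $\int_0^{u_1}$ is over an empty interval when $u_1=0$, and $\mu_{12}(u_2)>0$ because $X_1\ge 0$ is absolutely continuous, hence not a.s.\ zero. Part (ii) is the special case $u_2=0$.

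\emph{Part (iii).} Take $u_2=0$. Then $Q_2(0)$ is the lower end of the support of $X_2$, so the event $\{X_2>Q_2(0)\}$ has probability one. Hence $F_{12}(\cdot;0)=F_1$, $\mu_{12}(0)=\mu_1$ and $Q_{12}(p,0)=Q_1(p)$. This is consistent with the limit $\lim_{u_2\to0}Q_{12}=Q_1$ recorded in Section 2. Substituting gives the Gastwirth form $L_{12}(u_1,0)=\mu_1^{-1}\int_0^{u_1}Q_1(p)\,dp=L_1(u_1)$.

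\emph{Part (iv).} Fix $u_2<1$. Then $Q_{12}(\cdot,u_2)$ is a nonnegative integrable quantile function, since $\mu_{12}(u_2)\le \mu_1/(1-u_2)<\infty$. Monotone convergence gives $\int_0^{u_1}Q_{12}(p,u_2)\,dp\uparrow\mu_{12}(u_2)$ as $u_1\to1^-$, so $L_{12}(u_1,u_2)\to1$.

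\emph{Part (v).} This is the only part with real content, because the inner measure changes with $u_2$. I write the tail share
\[
1-L_{12}(u_1,u_2)=\frac{E\big[X_1\mathbf 1\{X_1>Q_{12}(u_1,u_2)\}\mid X_2>Q_2(u_2)\big]}{E\big[X_1\mid X_2>Q_2(u_2)\big]}.
\]
This is the share of the conditional total held by the top $(1-u_1)$-fraction of the conditional law. Two observations will be used.
\begin{enumerate}[(a)]
\item $L_{12}(\cdot,u_2)$ is a Lorenz curve: it is convex and nondecreasing in $u_1$, with $L_{12}(0,u_2)=0$ and $L_{12}(1^-,u_2)=1$. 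Hence $1-L_{12}$ is nonincreasing in $u_1$.
\item $L_{12}$ is continuous in $u_2$ on $[0,1)$, because $Q_{12}$ and $\mu_{12}$ are continuous there under absolute continuity.
\end{enumerate}
By (a), it suffices to show that for every $\varepsilon>0$ there exist $u_1^*<1$ and $u_2^*<1$ with $1-L_{12}(u_1^*,u_2)<\varepsilon$ for all $u_2\in[u_2^*,1)$. Monotonicity then extends the bound to every $u_1\ge u_1^*$, which gives the joint limit.

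The main obstacle is this uniformity as $u_2\to1$: the conditional laws of $X_1/\mu_{12}(u_2)$ given $X_2>Q_2(u_2)$ must not let a vanishing fraction carry a non-vanishing share of the mean. My first attempt would be to show that these normalized laws converge in distribution as $u_2\to1$ to a limit law with mean one, and that the means also converge. Together these give uniform integrability of the family, hence a uniformly small tail share. Concretely, I would express $Q_{12}(p,u_2)/\mu_{12}(u_2)$ through the joint survival function, as in Example \ref{eg_pareto}, and pass to the limit $u_2\to1$ pointwise in $p$. I would then invoke a Dini/Scheffé-type argument: pointwise convergence of quantile functions, together with convergence of their integrals over $[0,1]$, yields $L^1$ convergence. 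That in turn makes $\int_{u_1}^1 Q_{12}(p,u_2)\,dp/\mu_{12}(u_2)$ small uniformly in $u_2$ near $1$.

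The pointwise-limit step is the one I expect may fail in full generality. If it does, I would check it against the explicit Pareto and linear hazard formulas, where the limit is visibly one: for instance, in the Pareto case $1-L_{12}=1-u_1+c(1-u_1)\big((1-u_1)^{-1/c}-1\big)/\big(1+(c-1)(1-u_2)^{-1/c}\big)\to0$. I would then isolate the exact regularity on the conditional tails that the argument uses.
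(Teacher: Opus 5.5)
Your arguments for (i)--(iv) are correct and essentially the paper's. In (iv) you use monotone convergence where the paper cites dominated convergence; both work since $\mu_{12}(u_2)\le\mu_1/(1-u_2)$.

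For (v) you identified the right issue: the joint limit needs $1-L_{12}(u_1,u_2)$ to be small uniformly in $u_2$ near $1$. The paper's proof ignores this. By applying (iv) at each fixed $u_2^{(n)}$, it only obtains the iterated limit $\lim_{u_2\to1}\lim_{u_1\to1}L_{12}=1$. However, your gap cannot be closed, because as a joint limit (v) is false under the stated hypotheses.

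Here is a counterexample. Let $X_2\sim U(0,1)$, $T=1-X_2$, and let $V\sim U(0,1)$ be independent of everything else. Given $T=t$, put $X_1=1/t+V$ with probability $t$ and $X_1=V$ with probability $1-t$. Then $(X_1,X_2)$ is non-negative and absolutely continuous, with $E X_1=3/2$. For $u_2=1-s$, $s\in(0,1)$, the conditioning event is $\{T<s\}$. On it, the conditional law of $X_1$ puts mass $s/2$ on $(1/s,\infty)$ and spreads mass $1-s/2$ uniformly over $(0,1)$, and $\mu_{12}(1-s)=3/2$. Hence
\[
L_{12}(u_1,1-s)=\frac{u_1^2}{3(1-s/2)}\quad(u_1\le 1-s/2),\qquad\text{so}\qquad L_{12}(u,u)=\frac{2u^2}{3(1+u)}\longrightarrow\frac13\quad(u\to1^-),
\]
even though (iv) holds for every fixed $u_2$. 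In your terms, the normalized conditional laws converge in distribution to $U(0,2/3)$, whose mean is $1/3$ rather than $1$. This is exactly the failure of the pointwise/Scheff\'e step that you anticipated, and the family is not uniformly integrable.

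So what can be proved is (i)--(iv) as stated (an iterated reading of (v) is just (iv)), plus the joint limit in (v) under an additional hypothesis. The natural one is your own reduction promoted to an assumption: for some $u_2^*<1$,
\[
\lim_{u_1\to1^-}\ \sup_{u_2\in[u_2^*,1)}\bigl(1-L_{12}(u_1,u_2)\bigr)=0,
\]
and analogously for $L_{21}$. This is equivalent to uniform integrability of $X_1/\mu_{12}(u_2)$ given $X_2>Q_2(u_2)$ for $u_2$ near $1$. Monotonicity in $u_1$ then finishes the argument exactly as you describe, and convergence in distribution to a mean-one limit is a convenient sufficient condition. Falling back on the Pareto and linear hazard formulas would only verify (v) for those families, not the theorem.

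Incidentally, the Pareto formula you quote inherits a slip from Example~\ref{eg_pareto}. Since $X_1\mid X_2>Q_2(u_2)$ is distributed as $\theta_1\{1+(1-u_2)^{-1/c}(Z-1)\}$ with $P(Z>z)=z^{-c}$, $z\ge1$, every $(1-u_2)^{-1/c}$ in the displayed $L_{12}$ should be $(1-u_2)^{1/c}$. With either version, $1-L_{12}\le(1-u_1)+c(1-u_1)^{1-1/c}$ uniformly in $u_2$, so your conclusion for that example stands.
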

\begin{proof}
 We have \[
L_{12}(u_1,u_2)=\frac{1}{\mu_{12}(u_2)}\int_0^{u_1} Q_{12}(p,u_2)\,dp.
\] and\[
L_{21}(u_1,u_2)=\frac{1}{\mu_{21}(u_1)}\int_0^{u_2} Q_{21}(u_1,p)\,dp.
\]

We prove the stated properties for $L_{12}$; those for $L_{21}$ follow analogously.

\begin{enumerate}[(i)]

\item $L_{12}(0,u_2)=\frac{1}{\mu_{12}(u_2)}\int_0^{0} Q_{12}(p,u_2)\,dp=0.$
 \item $L_{12}(0,0)=\frac{1}{\mu_{12}(0)}\int_0^{0} Q_{12}(p,0)\,dp=0$.

\item By definition, $F_{12}(x_1;u_2) = P(X_1 \le x_1 \mid X_2 > Q_2(u_2)).$ Since $X_2$ is non-negative and absolutely continuous, we have $Q_2(0)=0$ and $P(X_2 > 0)=1$ at $u_2=0$. Thus,
\[\begin{aligned}
    F_{12}(x_1;0) &=P(X_1 \le x_1 \mid X_2 > Q_2(0))\\
    &=P(X_1 \le x_1 \mid X_2 > 0)\\
    &=P(X_1 \le x_1)=F_1(x_1)\\  
\end{aligned}
\]
Therefore, $Q_{12}(u_1,0)=Q_1(u_1)$ and $\mu_{12}(0)=\mu_1$. Consequently,
\[
L_{12}(u_1,0) = \frac{1}{\mu_{12}(0)}\int_0^{u_1}Q_{12}(p,0)\,dp = \frac{1}{\mu_1}\int_0^{u_1}Q_1(p)\,dp = L_1(u_1).
\]

\item As $u_1\to 1^-$, the numerator converges to $\int_0^1 Q_{12}(p,u_2)\,dp = \mu_{12}(u_2)$ by the dominated convergence theorem. Hence,
\[
\lim_{u_1\to 1^-} L_{12}(u_1,u_2) = \frac{\mu_{12}(u_2)}{\mu_{12}(u_2)} = 1.
\]
\item For any sequence $(u_1^{(n)}, u_2^{(n)}) \to (1,1)$, we have $u_1^{(n)} \to 1$. By part (iv), for each fixed $u_2^{(n)}$,
\[
\lim_{u_1^{(n)}\to 1^-} L_{12}(u_1^{(n)}, u_2^{(n)}) = 1.
\]
Since this holds for every such sequence, we conclude
\[
\lim_{(u_1,u_2)\to(1,1)} L_{12}(u_1,u_2) = 1.
\]
\end{enumerate}
Hence the stated boundary conditions hold.
\end{proof}

\begin{thm}\label{independence_thm}
 If $X_1$ and $X_2$ are independent, then
\[L_{12}(u_1, u_2) = L_1(u_1), 
\qquad 
L_{21}(u_1, u_2) = L_2(u_2).
\quad \forall u_1,u_2 \in[0,1) \]
\end{thm}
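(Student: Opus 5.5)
The plan is to show that under independence each conditional distribution entering the VBLS coincides with the corresponding marginal distribution. Once that is established, the conditional quantile functions and conditional means reduce to their marginal counterparts, and the identity follows directly from the quantile-based definition \eqref{lorenzquantile_v}. I will argue for $L_{12}$; the case of $L_{21}$ is symmetric, with the roles of $X_1$ and $X_2$ interchanged.

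\textbf{Step 1: the conditional distribution is the marginal.} Fix $u_2\in[0,1)$. Since $X_2$ is absolutely continuous and $u_2<1$, we have $P(X_2>Q_2(u_2))=1-u_2>0$, so the conditioning event has positive probability. Independence then gives, for every $x_1$,
\[
F_{12}(x_1;u_2)=\frac{P\bigl(X_1\le x_1,\,X_2>Q_2(u_2)\bigr)}{P\bigl(X_2>Q_2(u_2)\bigr)}=\frac{F_1(x_1)\,(1-u_2)}{1-u_2}=F_1(x_1).
\]

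\textbf{Step 2: quantiles and means reduce to marginal ones.} Because $F_{12}(\cdot\,;u_2)=F_1$ as distribution functions, the generalized inverses agree, so $Q_{12}(p,u_2)=\inf\{x_1:F_1(x_1)\ge p\}=Q_1(p)$ for all $p\in(0,1)$. Likewise $\mu_{12}(u_2)=\int x_1\,dF_1(x_1)=\mu_1$, which is finite and positive. Positivity holds because $X_1$ is non-negative and absolutely continuous, hence not degenerate at $0$.

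\textbf{Step 3: conclude.} Substituting into \eqref{lorenzquantile_v} gives
\[
L_{12}(u_1,u_2)=\frac{1}{\mu_1}\int_0^{u_1}Q_1(p)\,dp=L_1(u_1).
\]
This agrees with part (iii) of the preceding theorem, which is the special case $u_2=0$. Repeating the argument with the conditioning event $\{X_1>Q_1(u_1)\}$ gives $F_{21}(\cdot\,;u_1)=F_2$, hence $Q_{21}(u_1,p)=Q_2(p)$, $\mu_{21}(u_1)=\mu_2$, and $L_{21}(u_1,u_2)=L_2(u_2)$.

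There is no serious obstacle; the argument is essentially a direct computation. The only point needing care is Step 2, where the claim that the conditional quantile equals the marginal quantile rests on the equality of the full distribution functions from Step 1, not merely on an equality at a single point. The other point to keep in view is the restriction $u_2<1$, which guarantees that the conditioning event has positive probability.
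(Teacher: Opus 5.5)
Your proposal is correct and follows essentially the same route as the paper: independence factorizes $P(X_1\le x_1,\,X_2>Q_2(u_2))$, so $F_{12}(\cdot\,;u_2)=F_1$, hence $Q_{12}(\cdot,u_2)=Q_1$ and $\mu_{12}(u_2)=\mu_1$, which gives $L_{12}=L_1$ and, symmetrically, $L_{21}=L_2$. Your extra checks, that the conditioning event has probability $1-u_2>0$ and that $\mu_1>0$, are points the paper leaves implicit.
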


\begin{proof}
    Since $X_1$ and $X_2$ are independent, $P(X_1\le x_1,\;X_2>Q_2(u_2))
=
P(X_1\le x_1)\,P(X_2>Q_2(u_2)).$
Hence
$F_{12}(x_1; u_2) = P\big(X_1 \le x_1 \mid X_2> Q_2(u_2)\big) = F_1(x_1),$
$\mu_{12}(u_2) = \mu_1, \text{ and } Q_{12}(u_1, u_2) = Q_1(u_1).$
Hence
\[
L_{12}(u_1, u_2) = L_1(u_1).
\]
Similarly,
\[
L_{21}(u_1, u_2) = L_2(u_2). \]
\end{proof}

\begin{thm}
For each fixed $u_2\in[0,1)$, the function
$L_{12}(u_1,u_2)$ is non-decreasing in $u_1$ on $[0,1)$.
Similarly, for each fixed $u_1\in[0,1)$, the function
$L_{21}(u_1,u_2)$ is non-decreasing in $u_2$ on $[0,1)$.
\end{thm}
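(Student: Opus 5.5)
We work with the quantile-based form \eqref{lorenzquantile_v},
\[
L_{12}(u_1,u_2)=\frac{1}{\mu_{12}(u_2)}\int_0^{u_1}Q_{12}(p,u_2)\,dp ,
\]
and fix $u_2\in[0,1)$. Throughout, $u_2$ only enters through the constant $\mu_{12}(u_2)$ and through the integrand. The plan is to show that the normalising constant is a finite positive number and that the integrand is non-negative. The conclusion then follows from monotonicity of the integral with respect to its upper limit.

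\emph{Step 1: the normalising constant.} We note that $\mu_{12}(u_2)=E\big(X_1\mid X_2>Q_2(u_2)\big)$.

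Finiteness: since $P(X_2>Q_2(u_2))=1-u_2>0$ for $u_2<1$ by absolute continuity, we have
\[
\mu_{12}(u_2)\le \frac{E(X_1)}{1-u_2}<\infty .
\]

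Positivity: $\mu_{12}(u_2)\ge 0$ because $X_1\ge 0$. It is strictly positive because the conditional law of $X_1$ is absolutely continuous (it has a density $f_{12}$ obtained from the joint density), so it is not a point mass at $0$.

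This is the step I expect to need the most care, since everything hinges on the ratio being well defined with a positive denominator. The remaining steps are routine.

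\emph{Step 2: the integrand.} For each $p\in(0,1)$, $Q_{12}(p,u_2)$ is a quantile of the non-negative random variable $X_1\mid X_2>Q_2(u_2)$. Hence $Q_{12}(p,u_2)\ge 0$. It is also non-decreasing in $p$, by the definition $Q_{12}(p,u_2)=\inf\{x_1:F_{12}(x_1;u_2)\ge p\}$ and the monotonicity of $F_{12}$ in $x_1$.

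\emph{Step 3: monotonicity in $u_1$.} Take $0\le u_1<u_1'<1$. Then
\[
L_{12}(u_1',u_2)-L_{12}(u_1,u_2)=\frac{1}{\mu_{12}(u_2)}\int_{u_1}^{u_1'}Q_{12}(p,u_2)\,dp\ge 0 .
\]
The integral is finite because $\int_0^1 Q_{12}(p,u_2)\,dp=\mu_{12}(u_2)<\infty$. The difference is non-negative by Steps 1 and 2.

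Equivalently, wherever $Q_{12}(\cdot,u_2)$ is continuous, the fundamental theorem of calculus gives
\[
\frac{\partial L_{12}}{\partial u_1}(u_1,u_2)=\frac{Q_{12}(u_1,u_2)}{\mu_{12}(u_2)}\ge 0 .
\]
Since $Q_{12}(\cdot,u_2)$ is non-decreasing, the same computation shows the slope is non-decreasing. So $L_{12}(\cdot,u_2)$ is convex, which may be worth remarking in passing.

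\emph{Step 4: the second component.} The statement for $L_{21}$ in $u_2$ follows verbatim by interchanging the roles of the two variables. Fix $u_1$ and use
\[
\mu_{21}(u_1)=E\big(X_2\mid X_1>Q_1(u_1)\big)\in(0,\infty)
\]
together with $Q_{21}(u_1,p)\ge 0$.
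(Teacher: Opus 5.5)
Your proof is correct and follows essentially the same route as the paper. The paper differentiates to get $\partial L_{12}/\partial u_1 = Q_{12}(u_1,u_2)/\mu_{12}(u_2)\ge 0$ from $Q_{12}\ge 0$ and $\mu_{12}(u_2)>0$; your integral-increment form of the same fact is slightly more careful, because it does not need the derivative at discontinuities of $Q_{12}(\cdot,u_2)$ and it proves $0<\mu_{12}(u_2)<\infty$, which the paper only asserts.
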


\begin{proof}
From the definition of the VBLS,
\[
L_{12}(u_1,u_2)
=
\frac{1}{\mu_{12}(u_2)}
\int_0^{u_1}Q_{12}(p,u_2)\,dp.
\]

For fixed $u_2$, differentiating with respect to $u_1$ 
\[
\frac{\partial}{\partial u_1}
L_{12}(u_1,u_2)
=
\frac{Q_{12}(u_1,u_2)}
{\mu_{12}(u_2)}.
\]

Since $X_1$ is non-negative, the conditional quantile function satisfies $Q_{12}(u_1,u_2)\ge 0,
\,
u_1,u_2\in[0,1).$ Moreover, $\mu_{12}(u_2)
=
\int_0^1Q_{12}(p,u_2)\,dp
>0.$ Therefore,

\[
\frac{\partial}{\partial u_1}
L_{12}(u_1,u_2)
\ge 0,
\qquad
u_1\in[0,1).
\]

Hence, $L_{12}(u_1,u_2)$ is non-decreasing in $u_1$ on $[0,1)$.  The proof for $L_{21}(u_1,u_2)$ follows similarly by differentiating with respect to $u_2$.
\end{proof}

\begin{thm}
Suppose that for each fixed $p\in[0,1)$, $\frac{Q_{12}(p,u_2)}{\mu_{12}(u_2)}$
is non-decreasing in $u_2$ on $[0,1)$. Then, for each fixed $u_1\in[0,1)$, $L_{12}(u_1,u_2)$ is non-decreasing in $u_2$ on $[0,1)$.  Similarly, if for each fixed $p\in[0,1)$, $\frac{Q_{21}(u_1,p)}{\mu_{21}(u_1)}$
is non-decreasing in $u_1$ on $[0,1)$, then for each fixed $u_2\in[0,1)$, $L_{21}(u_1,u_2)$ is non-decreasing in $u_1$ on $[0,1)$.
\end{thm}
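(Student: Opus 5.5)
The argument is a direct monotonicity-under-the-integral argument using the quantile representation \eqref{lorenzquantile_v}. For fixed $u_1\in[0,1)$ I write
\[
L_{12}(u_1,u_2)=\int_0^{u_1}\frac{Q_{12}(p,u_2)}{\mu_{12}(u_2)}\,dp ,
\]
which is legitimate because $\mu_{12}(u_2)$ does not depend on the integration variable $p$ and is strictly positive (as noted in the proof of the preceding theorem). The integrand $g(p,u_2)=Q_{12}(p,u_2)/\mu_{12}(u_2)$ is non-negative and measurable in $p$, since $Q_{12}(\cdot,u_2)$ is a quantile function of a non-negative variable. By hypothesis, $g(p,u_2)$ is non-decreasing in $u_2$ for every fixed $p$.

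Take $0\le u_2<u_2'<1$. Then $g(p,u_2)\le g(p,u_2')$ for all $p\in[0,u_1]$. Monotonicity of the Lebesgue integral then gives
\[
L_{12}(u_1,u_2)=\int_0^{u_1}g(p,u_2)\,dp\le\int_0^{u_1}g(p,u_2')\,dp=L_{12}(u_1,u_2').
\]
Both integrals are finite, bounded by $1$, because $\int_0^1 Q_{12}(p,u_2)\,dp=\mu_{12}(u_2)$. This establishes the claim for $L_{12}$.

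The statement for $L_{21}$ follows by the symmetric argument. Fix $u_2$ and write
\[
L_{21}(u_1,u_2)=\int_0^{u_2}\frac{Q_{21}(u_1,p)}{\mu_{21}(u_1)}\,dp,
\]
then apply the assumed monotonicity of $Q_{21}(u_1,p)/\mu_{21}(u_1)$ in $u_1$ for each fixed $p$, and compare integrals pointwise in the same way.

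I expect no real obstacle. The only care needed is to keep the normalising constant inside the integral, so that the hypothesis is used exactly as stated. Monotonicity of $Q_{12}$ and of $\mu_{12}$ separately is neither needed nor implied. One also needs to check that finiteness and positivity of $\mu_{12}(u_2)$ hold for every $u_2\in[0,1)$; this follows from the finite means assumption, since $\mu_{12}(u_2)\le E(X_1)/(1-u_2)$.
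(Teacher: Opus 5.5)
Your argument is correct. The paper states this theorem without proof, and writing $L_{12}(u_1,u_2)=\int_0^{u_1} Q_{12}(p,u_2)/\mu_{12}(u_2)\,dp$ and comparing integrands pointwise is plainly the intended argument.

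Note also that the normalisation you already used, $\int_0^1 Q_{12}(p,u_2)/\mu_{12}(u_2)\,dp=1$ for every $u_2$, makes the hypothesis very strong. For $u_2<u_2'$, the non-negative difference $Q_{12}(p,u_2')/\mu_{12}(u_2')-Q_{12}(p,u_2)/\mu_{12}(u_2)$ integrates to zero over $[0,1]$, so it vanishes for a.e.\ $p$. Hence $L_{12}(u_1,\cdot)$ is in fact constant, equal to $L_{12}(u_1,0)=L_1(u_1)$, and the monotonicity holds with equality; the same argument gives this for $L_{21}$.
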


\begin{thm}
For each fixed $u_2\in[0,1)$, $L_{12}(u_1,u_2)$ is convex in $u_1 \in [0,1)$. Similarly, for each fixed $u_1\in[0,1)$, $L_{21}(u_1,u_2)$ is convex in $u_2 \in [0,1)$.
\end{thm}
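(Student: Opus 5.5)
The plan is to use the quantile representation \eqref{lorenzquantile_v}. For fixed $u_2\in[0,1)$ it reads
\[
L_{12}(u_1,u_2)=\frac{1}{\mu_{12}(u_2)}\int_0^{u_1}Q_{12}(p,u_2)\,dp .
\]
Here $\mu_{12}(u_2)>0$ is a constant in $u_1$, so convexity of $L_{12}(\cdot,u_2)$ is equivalent to convexity of the integral $u_1\mapsto\int_0^{u_1}Q_{12}(p,u_2)\,dp$. The key fact is that this integrand is a quantile function. For fixed $u_2$, $Q_{12}(\cdot,u_2)$ is the quantile function of the distribution $F_{12}(\cdot\,;u_2)$, namely
\[
Q_{12}(p,u_2)=\inf\{x_1:F_{12}(x_1;u_2)\ge p\}.
\]
It is therefore non-decreasing in $p$: if $p\le p'$, then $\{x_1:F_{12}(x_1;u_2)\ge p'\}\subseteq\{x_1:F_{12}(x_1;u_2)\ge p\}$, so the infimum over the smaller set is at least as large.

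Next I will invoke the standard fact that the indefinite integral of a non-decreasing, locally integrable function is convex. I will prove it directly rather than by differentiating, because $Q_{12}(\cdot,u_2)$ need not be continuous. Take $0\le a<b<1$ and $\lambda\in(0,1)$, and set $c=\lambda a+(1-\lambda)b$, so that $c-a=(1-\lambda)(b-a)$ and $b-c=\lambda(b-a)$. Write $G(u)=\int_0^{u}Q_{12}(p,u_2)\,dp$. Because the integrand is monotone,
\[
G(c)-G(a)\le (c-a)\,Q_{12}(c,u_2), \qquad G(b)-G(c)\ge (b-c)\,Q_{12}(c,u_2).
\]
Multiply the first inequality by $\lambda$ and the second by $1-\lambda$. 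Both right-hand sides then become $\lambda(1-\lambda)(b-a)Q_{12}(c,u_2)$, so
\[
\lambda\bigl(G(c)-G(a)\bigr)\le(1-\lambda)\bigl(G(b)-G(c)\bigr),
\]
which rearranges to $G(c)\le\lambda G(a)+(1-\lambda)G(b)$. Dividing by $\mu_{12}(u_2)>0$ gives convexity of $L_{12}(\cdot,u_2)$ on $[0,1)$. Where $Q_{12}$ is differentiable in $p$ this agrees with the quicker heuristic from the previous theorem: $\partial L_{12}/\partial u_1=Q_{12}(u_1,u_2)/\mu_{12}(u_2)$ is non-decreasing in $u_1$. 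I will mention this only as a remark.

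I need finiteness of the integrals for $u_1<1$, and I will take care of it briefly. Since $P(X_2>Q_2(u_2))=1-u_2>0$, we have
\[
\mu_{12}(u_2)=\frac{E\bigl[X_1\,\mathbf 1\{X_2>Q_2(u_2)\}\bigr]}{1-u_2}\le \frac{E(X_1)}{1-u_2}<\infty,
\]
so $Q_{12}(\cdot,u_2)$ is integrable on $[0,1)$. The claim for $L_{21}(u_1,\cdot)$ follows by the same argument with the roles of the variables interchanged, using that $Q_{21}(u_1,\cdot)$ is the quantile function of $F_{21}(\cdot\,;u_1)$. The only real obstacle is making sure the integrand is monotone in the correct argument (the first argument for $Q_{12}$, the second for $Q_{21}$) and avoiding differentiability assumptions; the chord argument above handles both.
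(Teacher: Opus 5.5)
Your proof is correct and rests on the same key fact as the paper's proof: for fixed $u_2$, $Q_{12}(\cdot,u_2)$ is non-decreasing because it is a quantile function, so its running integral, scaled by $1/\mu_{12}(u_2)>0$, is convex. The paper finishes by citing the lemma that an absolutely continuous function with non-decreasing a.e.\ derivative is convex, whereas you verify the chord inequality directly and also check integrability through $\mu_{12}(u_2)\le E(X_1)/(1-u_2)$, which justifies the same step in a slightly more self-contained way.
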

\begin{proof}
We prove the result for $L_{12}$, the proof of $L_{21}$ is analogous.

For each fixed $u_2$, $Q_{12}(p,u_2)$ is non-decreasing in $p$, since it is a quantile function. Since $L_{12}(u_1,u_2)$ is an integral function with non-negative integrand, it is absolutely continuous in $u_1$. Hence the derivative
\[
\frac{\partial}{\partial u_1}L_{12}(u_1,u_2) = \frac{Q_{12}(u_1,u_2)}{\mu_{12}(u_2)}
\]
exists almost everywhere and is non-decreasing in $u_1$, since $Q_{12}(\cdot,u_2)$ is non-decreasing and $\mu_{12}(u_2)>0$. A function that is absolutely continuous and has a non-decreasing derivative almost everywhere is convex. Therefore, $L_{12}(u_1,u_2)$ is convex in $u_1$.
\end{proof}
\begin{thm}
Suppose that for each fixed $p\in[0,1)$, the function $\frac{Q_{12}(p,u_2)}{\mu_{12}(u_2)}$ is convex in $u_2$ on $[0,1)$. Then, for each fixed $u_1\in[0,1)$, $L_{12}(u_1,u_2)$ is convex in $u_2$  on $[0,1)$.  Similarly, if for each fixed $p\in[0,1)$, the function $\frac{Q_{21}(u_1,p)}{\mu_{21}(u_1)}$ is convex in $u_1$ on $[0,1)$, then for each fixed $u_2\in[0,1)$, $L_{21}(u_1,u_2)$ is convex in $u_1$  on $[0,1)$.
\end{thm}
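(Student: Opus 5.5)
We write $L_{12}(u_1,u_2)$ as an integral, over a parameter range that does not depend on $u_2$, of functions that are convex in $u_2$. Convexity is preserved under nonnegative integration, so the conclusion follows almost directly. Fix $u_1\in[0,1)$ and set
\[
g_p(u_2)=\frac{Q_{12}(p,u_2)}{\mu_{12}(u_2)},\qquad p\in[0,u_1],\ u_2\in[0,1).
\]
Since $\mu_{12}(u_2)$ does not depend on the integration variable, the quantile form \eqref{lorenzquantile_v} of the VBLS gives
\[
L_{12}(u_1,u_2)=\int_0^{u_1} g_p(u_2)\,dp .
\]
By hypothesis each $g_p$ is convex in $u_2$ on $[0,1)$.

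The key step is to verify the convexity inequality directly. Take $v,w\in[0,1)$ and $\lambda\in[0,1]$, and note that $[0,1)$ is convex, so $\lambda v+(1-\lambda)w\in[0,1)$. Convexity of $g_p$ gives, pointwise in $p$,
\[
g_p\bigl(\lambda v+(1-\lambda)w\bigr)\le \lambda g_p(v)+(1-\lambda)g_p(w).
\]
Integrate this over $p\in[0,u_1]$. The measure $dp$ is nonnegative and the domain $[0,u_1]$ is the same for all three arguments, so monotonicity and linearity of the integral yield
\[
L_{12}\bigl(u_1,\lambda v+(1-\lambda)w\bigr)\le \lambda L_{12}(u_1,v)+(1-\lambda)L_{12}(u_1,w),
\]
which is the required convexity in $u_2$.

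The only point needing care is that these integrals are finite, so that linearity applies. Each $L_{12}(u_1,\cdot)$ lies in $[0,1]$, by the boundary properties and monotonicity established earlier together with the finite means assumption. Also, $g_p(u_2)\ge 0$ because $Q_{12}\ge 0$ and $\mu_{12}>0$, so every integral is well defined in $[0,\infty]$ and in fact finite. The identification $\int_0^{u_1}g_p(u_2)\,dp=L_{12}(u_1,u_2)$ is immediate, so this step is routine.

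The statement for $L_{21}$ follows by the same argument with the roles of the coordinates interchanged. We write $L_{21}(u_1,u_2)=\int_0^{u_2} Q_{21}(u_1,p)/\mu_{21}(u_1)\,dp$, where the integration range $[0,u_2]$ is fixed when $u_2$ is fixed. We then integrate the pointwise convexity inequality in $u_1$ over $p\in[0,u_2]$.
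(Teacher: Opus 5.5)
Your proof is correct. Note that the paper states this theorem without any proof, so there is no argument of the authors' to compare yours against. Your route is clearly the one the hypothesis is built for. With $u_1$ fixed, $L_{12}(u_1,\cdot)=\int_0^{u_1} Q_{12}(p,\cdot)/\mu_{12}(\cdot)\,dp$ is an integral, over a range independent of $u_2$, of functions that are convex in $u_2$. Integrating the pointwise convexity inequality against the nonnegative measure $dp$ then gives the result.

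This mechanism differs from the paper's proof of convexity in $u_1$, which argues that an absolutely continuous function with a non-decreasing a.e.\ derivative is convex. That argument does not transfer here: $u_2$ enters through the integrand rather than the upper limit of integration, and no differentiability in $u_2$ is assumed. Your direct inequality avoids the issue entirely.

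Your finiteness check is valid. It follows from Theorem~\ref{bound_thm}, or from $\mu_{12}(u_2)\le \mu_1/(1-u_2)<\infty$. Strictly, though, it is not needed. Monotonicity and linearity of the integral already hold for nonnegative measurable integrands with values in $[0,\infty]$, and $p\mapsto Q_{12}(p,u_2)$ is measurable because it is monotone. The $L_{21}$ case goes through identically, as you say.
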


\begin{thm}\label{scale_thm}
Let $(X_1,X_2)$ be a non-negative absolutely continuous random vector with finite means, and let $Y_1=aX_1,\, Y_2=bX_2,$ where $a,b>0$. Then the VBLS of $(Y_1,Y_2)$ coincides with that of $(X_1,X_2)$; i.e., $\mathbf{L_Y}(u_1,u_2)=\mathbf{L_X}(u_1,u_2),
\qquad
\forall u_1,u_2\in[0,1).$
Equivalently, $L_{12}^{Y}(u_1,u_2)=L_{12}^{X}(u_1,u_2),
\,
L_{21}^{Y}(u_1,u_2)=L_{21}^{X}(u_1,u_2).$

\end{thm}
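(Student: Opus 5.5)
The plan is to work with the quantile-based representation \eqref{lorenzquantile_v} and show that each ingredient of $L_{12}$ (and, symmetrically, $L_{21}$) scales by the same factor $a$ (respectively $b$), which then cancels in the ratio. Concretely, I will verify three facts about $(Y_1,Y_2)=(aX_1,bX_2)$: the marginal quantile satisfies $Q_2^{Y}(u_2)=bQ_2^{X}(u_2)$; the conditional quantile satisfies $Q^{Y}_{12}(p,u_2)=aQ^{X}_{12}(p,u_2)$; and the conditional mean satisfies $\mu^{Y}_{12}(u_2)=a\,\mu^{X}_{12}(u_2)$.

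For the first step, note that $F_{2}^{Y}(y)=P(bX_2\le y)=F_2^X(y/b)$. Since $b>0$, the infimum in the definition of the quantile transforms as $Q_2^{Y}(u_2)=bQ_2^{X}(u_2)$. Because $b>0$, the events $\{Y_2>Q_2^Y(u_2)\}$ and $\{X_2>Q_2^X(u_2)\}$ coincide. Hence
\[
F^{Y}_{12}(y_1;u_2)=P\bigl(aX_1\le y_1 \mid X_2>Q_2^X(u_2)\bigr)=F^{X}_{12}(y_1/a;u_2).
\]
Applying the same infimum argument to this conditional distribution function gives $Q^{Y}_{12}(p,u_2)=aQ^{X}_{12}(p,u_2)$ for every $p\in[0,1)$. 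Integrating over $p\in(0,1)$, or using linearity of conditional expectation, then gives $\mu^Y_{12}(u_2)=a\mu^X_{12}(u_2)$, which is finite and positive by assumption.

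Substituting these into \eqref{lorenzquantile_v}, the factor $a$ pulls out of $\int_0^{u_1}Q^Y_{12}(p,u_2)\,dp$ and cancels with the $a$ in the denominator, so $L^Y_{12}=L^X_{12}$. The argument for $L_{21}$ is identical with the roles of $(a,X_1)$ and $(b,X_2)$ exchanged. There the conditioning event $\{Y_1>Q_1^Y(u_1)\}=\{X_1>Q_1^X(u_1)\}$ is unaffected by the scale $a$, and $Q_{21}$ and $\mu_{21}$ both pick up the factor $b$.

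The only point needing care is the claim that the conditioning events are invariant. This is where positivity of the scale constants and the correct transformation of the marginal quantile of the \emph{other} variable are essential; with a negative constant the event would flip. Everything else is routine, so I expect no genuine obstacle beyond stating the quantile-scaling identity $\inf\{y:G(y/c)\ge u\}=c\inf\{x:G(x)\ge u\}$ for $c>0$ once and reusing it.
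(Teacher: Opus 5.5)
Your proposal is correct and follows essentially the same route as the paper: you show $Q_2^{Y}=bQ_2^{X}$, then $F_{12}^{Y}(y_1;u_2)=F_{12}^{X}(y_1/a;u_2)$, then use the infimum substitution to get $Q_{12}^{Y}=aQ_{12}^{X}$ and $\mu_{12}^{Y}=a\mu_{12}^{X}$, cancel $a$ in the ratio, and handle $L_{21}$ by symmetry. You also state explicitly that $b>0$ is what makes the conditioning events $\{Y_2>Q_2^{Y}(u_2)\}$ and $\{X_2>Q_2^{X}(u_2)\}$ coincide, which the paper's proof uses only implicitly.
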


\begin{proof}
Since $Y_2=bX_2$, the marginal quantile function $Q_2^{Y}(u_2)$ satisfies $Q_2^{Y}(u_2)
=
\inf\{y_2:F_2^{Y}(y_2)\ge u_2\}
=
\inf\left\{y_2:F_2^{X}\!\left(\frac{y_2}{b}\right)\ge u_2\right\}.$ Let $x_2=\frac{y_2}{b}$, so that $y_2=bx_2$.  Hence, 
$$\begin{aligned}
  Q_2^{Y}(u_2)
&=\inf\{bx_2:F_2^{X}(x_2)\ge u_2\}\\
&=b\,\inf\{x_2:F_2^{X}(x_2)\ge u_2\}\\
&=b\,Q_2^{X}(u_2).  
\end{aligned}$$
Substituting $Y_1=aX_1$, $Y_2=bX_2$, and $Q_2^{Y}(u_2)=b\,Q_2^{X}(u_2)$, in $F_{12}^{Y}(y_1;u_2)$  we obtain,
$$\begin{aligned}
    F_{12}^{Y}(y_1;u_2)&=P(Y_1\le y_1\mid Y_2> Q_2^{Y}(u_2))\\
    &=P(aX_1\le y_1\mid bX_2> b\,Q_2^{X}(u_2))\\
    &=P\left(X_1\le \frac{y_1}{a}\mid X_2> Q_2^{X}(u_2)\right)\\
    &=F_{12}^{X}\left(\frac{y_1}{a};u_2\right).
\end{aligned}$$  

Therefore, $Q_{12}^{Y}(u_1,u_2)
=
\inf\left\{
y_1:
F_{12}^{Y}(y_1;u_2)\ge u_1
\right\}=\inf\left\{
y_1:
F_{12}^{X}\left(\frac{y_1}{a};u_2\right)\ge u_1
\right\}.$
Let $x_1=\frac{y_1}{a}$, so that $y_1=ax_1$. Then 
\[\begin{aligned}
   Q_{12}^{Y}(u_1,u_2)
&=\inf\left\{ax_1:F_{12}^{X}(x_1;u_2)\ge u_1
\right\}\\
&=a\,\inf\left\{x_1:F_{12}^{X}(x_1;u_2)\ge u_1
\right\}\\
&=a\,Q_{12}^{X}(u_1,u_2). 
\end{aligned}
\]
Further, $\mu_{12}^{Y}(u_2)
=
\int_0^1Q_{12}^{Y}(p,u_2)\,dp
=
a\int_0^1Q_{12}^{X}(p,u_2)\,dp
=
a\,\mu_{12}^{X}(u_2).$ Therefore,

\[\begin{aligned}
    L_{12}^{Y}(u_1,u_2)
&=\frac{1}{\mu_{12}^{Y}(u_2)}
\int_0^{u_1}Q_{12}^{Y}(p,u_2)\,dp\\
&=\frac{1}{a\,\mu_{12}^{X}(u_2)}
\int_0^{u_1}a\,Q_{12}^{X}(p,u_2)\,dp\\
&=L_{12}^{X}(u_1,u_2).
\end{aligned}
\]
The proof for $L_{21}$ follows by symmetry. Hence, the VBLS is scale-invariant.
\end{proof}
\section{Egalitarian surface and Gini index}
In the univariate case, the Lorenz curve $L(u)$ represents the proportion of total income held by the lowest $u$-fraction of the population. The egalitarian line is given by $L(u)=u,\, u\in[0,1)$, which represents perfect equality,  the situation in which every individual possesses exactly the same amount of the resource/income. Extending the notion of egalitarian line to the bivariate setting, we have the egalitarian surface for the VBLS as a benchmark where perfect equality holds in every conditional distribution, given in the following theorem.
\begin{thm}
Let $(X_1, X_2)$ be a non-negative absolutely continuous random vector with finite means. If there exist constants $c_1>0$ and $c_2>0$ such that
$X_1=c_1$ almost surely and $X_2=c_2$ almost surely, then for all $u_1,u_2\in[0,1)$,
\[
L_{12}(u_1,u_2)=u_1 \quad \text{and} \quad L_{21}(u_1,u_2)=u_2.
\]
Conversely, if
$L_{12}(u_1,u_2)=u_1 \text{ and }
L_{21}(u_1,u_2)=u_2 \text{ for all } u_1,u_2\in[0,1),$ then $X_1$ and $X_2$ are constants almost surely.
\end{thm}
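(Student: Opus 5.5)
For the forward direction I would compute the conditional quantile functions directly. If $X_1=c_1$ and $X_2=c_2$ almost surely, then for every $u_2\in[0,1)$ the event $\{X_2>Q_2(u_2)\}$ (interpreted through the bivariate quantile representation, where the conditioning event carries positive probability) leaves the conditional law of $X_1$ degenerate at $c_1$. Thus $F_{12}(x_1;u_2)=\mathbf{1}\{x_1\ge c_1\}$, so $Q_{12}(p,u_2)=c_1$ for $p\in(0,1)$ and $\mu_{12}(u_2)=c_1$. The quantile form \eqref{lorenzquantile_v} then gives $L_{12}(u_1,u_2)=c_1^{-1}\int_0^{u_1}c_1\,dp=u_1$. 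An alternative that sidesteps the conditioning on a degenerate event is to note that a degenerate pair is trivially independent and invoke Theorem \ref{independence_thm}: $L_{12}(u_1,u_2)=L_1(u_1)$, and the univariate Lorenz curve of a constant is $u_1$. The same argument gives $L_{21}(u_1,u_2)=u_2$.

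For the converse, I would fix $u_2=0$ and use part (iii) of the boundary theorem, $L_{12}(u_1,0)=L_1(u_1)$. The hypothesis then yields $L_1(u_1)=u_1$ on $[0,1)$. Since $L_1(u_1)=\mu_1^{-1}\int_0^{u_1}Q_1(p)\,dp$, differentiating almost everywhere (Lebesgue differentiation, or the derivative formula used in the monotonicity theorem) gives $Q_1(u_1)=\mu_1$ for almost every $u_1$. Because $Q_1$ is non-decreasing and left-continuous, it is constant on $(0,1)$. Hence $X_1\stackrel{d}{=}Q_1(U)$ with $U$ uniform, so $X_1=\mu_1$ almost surely. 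Symmetrically, $L_{21}(0,u_2)=L_2(u_2)=u_2$ forces $X_2=\mu_2$ almost surely, and we may take $c_i=\mu_i>0$.

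The main obstacle I expect is a tension between the hypotheses of the theorem and degenerate random variables. A constant is not absolutely continuous, and $P(X_2>Q_2(u_2))=P(c_2>c_2)=0$, so $F_{12}$ is literally undefined. In the forward direction I would address this explicitly, either by reading the statement as a limiting or egalitarian benchmark or by working with the quantile-based definition and the convention $Q_{12}=Q_1$ under independence. The converse is genuinely rigorous: it only uses the marginal restriction $u_2=0$ or $u_1=0$. It shows that the hypothesis cannot hold for any absolutely continuous vector, which I would remark makes the egalitarian surface an idealized boundary case.
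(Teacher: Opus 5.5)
Your proposal is correct and matches the paper's proof in substance. The forward direction is the same degenerate-conditional-law computation, giving $Q_{12}\equiv c_1$ and $\mu_{12}\equiv c_1$. Your converse restricts to $u_2=0$ via boundary property (iii) and then differentiates; the paper differentiates first to get $Q_{12}(\cdot,u_2)\equiv\mu_{12}(u_2)$ and then sets $u_2=0$, so the two are the same argument in the opposite order.

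Your observation that $P(X_2>Q_2(u_2))=0$ for $u_2\in(0,1)$ is a real defect, which the paper's proof passes over when it writes $P(A)/P(A)=1$. However, invoking Theorem~\ref{independence_thm} does not actually sidestep it, since that theorem's proof divides by the same zero probability. The forward direction therefore rests on a convention in either version.
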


\begin{proof}
We prove both directions of implication.
First, let $X_1=c_1$ almost surely. Then $P(X_1 = c_1) = 1.$ Now for any fixed $u_2 \in [0,1)$. Consider the event $A = \{X_2 > Q_2(u_2)\}$,
\[
P(X_1 = c_1 \mid A)
= \frac{P(X_1 = c_1 \cap A)}{P(A)}
= \frac{P(A)}{P(A)}
= 1.
\]
Hence, the conditional distribution of $X_1$ given $X_2>Q_2(u_2)$ is degenerate at $c_1$. Since $u_2$ was arbitrary, we have
$Q_{12}(u_1,u_2)=c_1 $ for all $ u_1,u_2\in[0,1)$
and $\mu_{12}(u_2)=\int_0^1 Q_{12}(p,u_2)\,dp=c_1.$
Thus,
\[
L_{12}(u_1,u_2) = \frac{1}{\mu_{12}(u_2)}\int_0^{u_1}{ Q_{12}(p, u_2)\,dp} = \frac{1}{c_1}\int_0^{u_1} c_1\,dp=u_1.
\]
Similarly, if $X_2=c_2$ almost surely, then for all $u_1,u_2\in[0,1)$, $L_{21}(u_1,u_2)=u_2.$

Conversely, assume $L_{12}(u_1,u_2)=u_1$ for all $u_1,u_2\in[0,1)$.  For any arbitrary $u_2\in[0,1)$, differentiating with respect to $u_1$
\[
\frac{\partial}{\partial u_1} L_{12}(u_1,u_2)
= \frac{Q_{12}(u_1,u_2)}{\mu_{12}(u_2)}
= 1.
\]
Thus, $Q_{12}(u_1,u_2)=\mu_{12}(u_2) \text{ for all } u_1\in[0,1)$,  which uniquely characterizes a degenerate distribution of the conditional distribution of $X_1$ given $X_2>Q_2(u_2)$. Hence, the conditional distribution is degenerate at $\mu_{12}(u_2)$. Now take $u_2=0$. Then $Q_2(0)$ is the lower endpoint of the support, so the conditioning event is the whole space. Hence, $P\big(X_1=\mu_{12}(0)\big)=1,$ $i.e.$, $X_1$ is constant almost surely. The same argument for $L_{21}$ shows that $X_2$ is constant almost surely.
\end{proof}
Based on the above theorem, we define the egalitarian surface for the VBLS of $(X_1, X_2)$ as follows.
\begin{defi}
The egalitarian surface for the vector-valued bivariate Lorenz surface is defined as a vector,
\[
\mathbf{L}(u_1,u_2) = (u_1,u_2) \quad \forall u_1,u_2\in[0,1).
\]
Equivalently, $L_{12}(u_1,u_2)=u_1 \quad \text{and} \quad L_{21}(u_1,u_2)=u_2 \quad \forall u_1,u_2\in[0,1).$
\end{defi}
The egalitarian surface for the VBLS implies that the joint distribution of $(X_1,X_2)$ attains the case of perfect equality. By the preceding theorem, this occurs if and only if the joint distribution is degenerate, that is, there exist constants $c_1>0$ and $c_2>0$ such that $X_1=c_1$ and $X_2=c_2$ almost surely. Hence, it corresponds to the complete absence of inequality in both variables. Any deviation from this surface indicates the presence of inequality in at least one of the variables.
\begin{thm}\label{bound_thm}
Let $(X_1,X_2)$ be a non-negative absolutely continuous random vector with finite means. Then for all
$u_1,u_2\in[0,1)$, $0\le L_{12}(u_1,u_2)\le u_1,
\text{ and }
0\le L_{21}(u_1,u_2)\le u_2.$

\end{thm}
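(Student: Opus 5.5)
We prove the bounds for $L_{12}$; the argument for $L_{21}$ is identical with the roles of the coordinates interchanged. Fix $u_2\in[0,1)$ and write $g(u_1)=L_{12}(u_1,u_2)$, regarded as a function of $u_1$ on $[0,1]$ with $g(1)=1$ (extended by the limit in part (iv) of the boundary-condition theorem).

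The lower bound is immediate. Since $X_1$ is non-negative, $Q_{12}(p,u_2)\ge 0$ for all $p$, so $\int_0^{u_1}Q_{12}(p,u_2)\,dp\ge 0$. We also have $\mu_{12}(u_2)=\int_0^1Q_{12}(p,u_2)\,dp>0$, as already used in the monotonicity theorem. Hence $g(u_1)\ge 0$.

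For the upper bound we use the convexity of $L_{12}$ in $u_1$ proved in the convexity theorem above, together with the endpoint values $g(0)=0$ (part (i)) and $g(1^-)=1$ (part (iv)). Convexity on $[0,1]$ gives, for $u_1\in[0,1)$,
\[
g(u_1)=g\bigl((1-u_1)\cdot 0+u_1\cdot 1\bigr)\le (1-u_1)g(0)+u_1 g(1)=u_1 .
\]
The step needing care is that $g$ is only defined, and shown convex, on $[0,1)$. To avoid that issue we argue directly instead. Since $Q_{12}(\cdot,u_2)$ is non-decreasing, the average of $Q_{12}$ over $[0,u_1]$ is at most its average over $[u_1,1]$. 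Setting $A=\int_0^{u_1}Q_{12}\,dp$ and $B=\int_{u_1}^1Q_{12}\,dp$, this says
\[
\frac{A}{u_1}\le Q_{12}(u_1,u_2)\le \frac{B}{1-u_1}.
\]
It follows that $(1-u_1)A\le u_1B$, that is, $A\le u_1(A+B)=u_1\mu_{12}(u_2)$. Dividing by $\mu_{12}(u_2)>0$ gives $g(u_1)\le u_1$. The case $u_1=0$ is trivial by part (i).

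The main obstacle is this endpoint/integrability issue, and the direct averaging argument handles it cleanly. It requires only the finiteness of $\mu_{12}(u_2)$, which follows from $E X_1<\infty$ because $\mu_{12}(u_2)\le E X_1/(1-u_2)$.
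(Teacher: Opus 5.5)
Your proposal is correct and follows essentially the same route as the paper. The lower bound uses $Q_{12}\ge 0$ and $\mu_{12}(u_2)>0$. For the upper bound, monotonicity of $Q_{12}(\cdot,u_2)$ gives $(1-u_1)\int_0^{u_1}Q_{12}\,dp\le u_1\int_{u_1}^1 Q_{12}\,dp$, and adding $u_1\int_0^{u_1}Q_{12}\,dp$ to both sides yields $\int_0^{u_1}Q_{12}\,dp\le u_1\mu_{12}(u_2)$. The only difference is cosmetic: the paper compares $Q_{12}(p,u_2)\le Q_{12}(r,u_2)$ for $p\le u_1\le r$ and integrates in both variables, whereas you pass through $Q_{12}(u_1,u_2)$ as the intermediate bound.
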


\begin{proof}
First we prove $L_{12}(u_1,u_2)\ge 0$. Since $X_1$ is non-negative, all quantiles satisfy
$Q_{12}(p,u_2)\ge0$ for all $p\in[0,1)$. Hence $\int_0^{u_1}
Q_{12}(p,u_2)\,dp
\ge0$. Also, $\mu_{12}(u_2)=\int_0^1
Q_{12}(p,u_2)\,dp>0$.  Therefore,
$$\frac{1}{\mu_{12}(u_2)}\int_0^{u_1} Q_{12}(p,u_2)\,dp
\ge0
\implies
L_{12}(u_1,u_2)\ge0.
$$
To establish the upper bound, note that $Q_{12}(p,u_2)$ is a quantile function and is therefore non-decreasing in $p$. Hence, for every $0\le p\le u_1\le r\le 1,$ we have
$$
Q_{12}(p,u_2)\le Q_{12}(r,u_2).
$$
Integrating both sides with respect to $p$ over $[0,u_1]$ yields
$$
\int_0^{u_1}
Q_{12}(p,u_2)\,dp
\le
u_1\,Q_{12}(r,u_2),
\qquad r\in[u_1,1].
$$
Integrating this inequality again with respect to $r$ over $[u_1,1]$, we obtain
$$
(1-u_1)
\int_0^{u_1}
Q_{12}(p,u_2)\,dp
\le
u_1
\int_{u_1}^{1}
Q_{12}(r,u_2)\,dr.
$$
Adding $u_1
\int_0^{u_1}
Q_{12}(p,u_2)\,dp$ to both sides gives
$$
\int_0^{u_1}
Q_{12}(p,u_2)\,dp
\le
u_1
\int_0^{1}
Q_{12}(p,u_2)\,dp.
$$
Since $\mu_{12}(u_2)
=
\int_0^{1}
Q_{12}(p,u_2)\,dp,$ it follows that
$$
\int_0^{u_1}
Q_{12}(p,u_2)\,dp
\le
u_1\,\mu_{12}(u_2).
$$
Dividing both sides by $\mu_{12}(u_2)>0$, we obtain
$$
L_{12}(u_1,u_2)
=
\frac{1}{\mu_{12}(u_2)}
\int_0^{u_1}
Q_{12}(p,u_2)\,dp
\le
u_1.
$$
Therefore,$0\le L_{12}(u_1,u_2)\le u_1$.  Interchanging the roles of $X_1$ and $X_2$ yields $0\le L_{21}(u_1,u_2)\le u_2$.  Hence the result follows.
\end{proof}

\begin{thm}\label{Locshift_thm}
Let $(X_1,X_2)$ be a non-negative absolutely continuous random vector with finite means and define
$Y_1=X_1+c_1$, $Y_2=X_2+c_2$, where
$c_1,c_2\ge 0$. Then the VBLS of
$(Y_1,Y_2)$ satisfies
$$
L_{12}^{Y}(u_1,u_2)
=
\frac{\mu_{12}^{X}(u_2)}
{\mu_{12}^{X}(u_2)+c_1}
L_{12}^{X}(u_1,u_2)
+
\frac{c_1}
{\mu_{12}^{X}(u_2)+c_1}
u_1,
$$
$$
L_{21}^{Y}(u_1,u_2)
=
\frac{\mu_{21}^{X}(u_1)}
{\mu_{21}^{X}(u_1)+c_2}
L_{21}^{X}(u_1,u_2)
+
\frac{c_2}
{\mu_{21}^{X}(u_1)+c_2}
u_2.
$$
\end{thm}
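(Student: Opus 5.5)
The plan is to mirror the proof of Theorem \ref{scale_thm}, replacing the scale transformation by a translation. We first compute the conditional quantile function of $Y_1$ given the conditioning event. Since $Y_2=X_2+c_2$, the marginal quantile satisfies
\[
Q_2^{Y}(u_2)=\inf\{y_2:F_2^X(y_2-c_2)\ge u_2\}=Q_2^X(u_2)+c_2 .
\]
The conditioning event is therefore unchanged:
\[
\{Y_2>Q_2^Y(u_2)\}=\{X_2>Q_2^X(u_2)\}.
\]
It follows that $F_{12}^Y(y_1;u_2)=F_{12}^X(y_1-c_1;u_2)$. Applying the same substitution in the infimum, now with $x_1=y_1-c_1$, gives
\[
Q_{12}^Y(u_1,u_2)=Q_{12}^X(u_1,u_2)+c_1 .
\]
Integrating over $p\in[0,1]$ then yields $\mu_{12}^Y(u_2)=\mu_{12}^X(u_2)+c_1$.

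Next we substitute into the quantile-based definition \eqref{lorenzquantile_v}:
\[
L_{12}^Y(u_1,u_2)=\frac{1}{\mu_{12}^X(u_2)+c_1}\left(\int_0^{u_1}Q_{12}^X(p,u_2)\,dp+c_1u_1\right).
\]
We then rewrite the first integral as $\mu_{12}^X(u_2)L_{12}^X(u_1,u_2)$. This immediately gives the stated convex combination of $L_{12}^X$ and the egalitarian value $u_1$. The formula for $L_{21}^Y$ follows by interchanging the roles of $X_1$ and $X_2$, using $Q_1^Y(u_1)=Q_1^X(u_1)+c_1$ and conditioning on $\{X_1>Q_1^X(u_1)\}$.

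The only step that needs some care is the invariance of the conditioning event. This holds because translating $X_2$ by $c_2$ shifts its quantile by exactly $c_2$. As a result, $c_2$ does not enter $L_{12}^Y$ at all, and symmetrically $c_1$ does not enter $L_{21}^Y$, which is consistent with the statement. We also note that $\mu_{12}^X(u_2)+c_1>0$, so the division is legitimate. Finiteness of all the means is inherited from the assumption that the means of $(X_1,X_2)$ are finite.
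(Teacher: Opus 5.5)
Your proposal is correct and follows essentially the same route as the paper. Both arguments show $Q_2^Y=Q_2^X+c_2$ (so the conditioning event is unchanged), deduce $Q_{12}^Y=Q_{12}^X+c_1$ and $\mu_{12}^Y=\mu_{12}^X+c_1$, and then substitute into the quantile-based definition; your explicit remarks on the invariance of the conditioning event and the positivity of $\mu_{12}^X(u_2)+c_1$ are small clarifications the paper leaves implicit.
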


\begin{proof}

Since
$Y_1=X_1+c_1$ and
$Y_2=X_2+c_2$,
$$\begin{aligned}F_2^Y(y_2)&=P(Y_2\le y_2)\\
&=P(X_2+c_2\le y_2)\\
&=P(X_2\le y_2-c_2)\\
&=F_2^X(y_2-c_2).
\end{aligned}$$
Let $x_2=y_2-c_2$. Then $y_2=x_2+c_2$ and hence quantile function of $Y_2$ becomes
\[\begin{aligned}
Q_2^Y(u_2) &= \inf\{y_2 : F_2^X(y_2 - c_2) \ge u_2\}\\
&= \inf\{x_2+c_2 : F_2^X(x_2) \ge u_2\}\\
&=c_2+\inf\{x_2 : F_2^X(x_2) \ge u_2\}\\
&= Q_2^X(u_2) + c_2.   
\end{aligned}
\]
Therefore,
$$
\begin{aligned}
 F_{12}^{Y}(y_1;u_2)&=P(Y_1\le y_1\mid Y_2>Q_2^{Y}(u_2))\\
 &=P(X_1+c_1\le y_1\mid X_2+c_2> Q_2^{Y}(u_2))
 \\
 &=P(X_1\le y_1-c_1\mid X_2> Q_2^{X}(u_2))\\
&=F_{12}(y_1-c_1;u_2).
\end{aligned}
$$
Substituting the expression for $F_{12}^{Y}$ in the  definition of the quantile function, 
$$
\begin{aligned}
Q_{12}^{Y}(u_1,u_2)&=\inf\{y_1: F_{12}^{Y}(y_1;u_2)\ge u_1\}\\
&=\inf\{y_1: F_{12}^{X}(y_1-c_1;u_2)\ge u_1\}.
\end{aligned}
$$
Let $x=y_1-c_1$. Then
$y_1=x+c_1$, and
$$
\begin{aligned}
 Q_{12}^{Y}(u_1,u_2)&=\inf\{x+c_1:
F_{12}^{X}(x;u_2)\ge u_1\}\\
&=c_1+\inf\{x:
F_{12}^{X}(x;u_2)\ge u_1\}\\
&=Q_{12}^{X}(u_1,u_2)+c_1.   
\end{aligned}
$$
The conditional mean transforms as
$$
\mu_{12}^{Y}(u_2)
=
\int_0^1
Q_{12}^{Y}(p,u_2)\,dp
=
\int_0^1
(Q_{12}^{X}(p,u_2)+c_1)\,dp
=
\mu_{12}^{X}(u_2)+c_1.
$$
Now, using the definition of the VBLS,
$$
L_{12}^{Y}(u_1,u_2)
=
\frac{1}{\mu_{12}^{Y}(u_2)}
\int_0^{u_1}
Q_{12}^{Y}(p,u_2)\,dp.
$$
Substituting the expressions for
$Q_{12}^{Y}$ and
$\mu_{12}^{Y}$,
$$
\begin{aligned}
    L_{12}^{Y}(u_1,u_2)
&=
\frac{1}{\mu_{12}^{X}(u_2)+c_1}
\int_0^{u_1}
(Q_{12}^{X}(p,u_2)+c_1)\,dp\\
&=\frac{
\int_0^{u_1}
Q_{12}^{X}(p,u_2)\,dp
+
c_1u_1
}
{\mu_{12}^{X}(u_2)+c_1}\\
&=\frac{
\mu_{12}^{X}(u_2)
L_{12}^{X}(u_1,u_2)
+
c_1u_1
}
{\mu_{12}^{X}(u_2)+c_1}.
\end{aligned}
$$
Thus $L_{12}^{Y}(u_1,u_2)
=
\frac{\mu_{12}^{X}(u_2)}
{\mu_{12}^{X}(u_2)+c_1}
L_{12}^{X}(u_1,u_2)
+
\frac{c_1}
{\mu_{12}^{X}(u_2)+c_1}
u_1.$
The proof for $L_{21}$ follows analogously by interchanging the roles of
$X_1$ and $X_2$.
\end{proof}
 
\begin{thm}
Let $(X_1,X_2)$ be a non-negative absolutely continuous random vector with finite means and define $Y_1=aX_1+c_1,\,
Y_2=bX_2+c_2,$ where $a,b>0$ and $c_1,c_2\ge0$. Then
\[
L_{12}^{Y}(u_1,u_2)
=
\frac{a\mu_{12}^{X}(u_2)}
     {a\mu_{12}^{X}(u_2)+c_1}
L_{12}^{X}(u_1,u_2)
+
\frac{c_1}
     {a\mu_{12}^{X}(u_2)+c_1}
u_1,
\]
and
\[
L_{21}^{Y}(u_1,u_2)
=
\frac{b\mu_{21}^{X}(u_1)}
     {b\mu_{21}^{X}(u_1)+c_2}
L_{21}^{X}(u_1,u_2)
+
\frac{c_2}
     {b\mu_{21}^{X}(u_1)+c_2}
u_2,
\]
for all $u_1,u_2\in[0,1)$.
\end{thm}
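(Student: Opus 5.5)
The plan is to compose the two transformations already handled: first scale, then shift. Set $Z_1=aX_1$ and $Z_2=bX_2$, so that $Y_1=Z_1+c_1$ and $Y_2=Z_2+c_2$. Since $a,b>0$, the vector $(Z_1,Z_2)$ is again non-negative and absolutely continuous with finite means. This lets me invoke both Theorem~\ref{scale_thm} and Theorem~\ref{Locshift_thm}.

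First I record the quantities for $Z$ that the proof of Theorem~\ref{scale_thm} already gives:
\[
Q^{Z}_{12}(u_1,u_2)=a\,Q^{X}_{12}(u_1,u_2), \qquad \mu^{Z}_{12}(u_2)=a\,\mu^{X}_{12}(u_2),
\]
and by the statement of Theorem~\ref{scale_thm}, $L^{Z}_{12}=L^{X}_{12}$. Next I apply Theorem~\ref{Locshift_thm} to $(Z_1,Z_2)$ with shifts $c_1,c_2\ge 0$. This gives
\[
L^{Y}_{12}(u_1,u_2)=\frac{\mu^{Z}_{12}(u_2)}{\mu^{Z}_{12}(u_2)+c_1}\,L^{Z}_{12}(u_1,u_2)+\frac{c_1}{\mu^{Z}_{12}(u_2)+c_1}\,u_1 .
\]
Substituting $\mu^{Z}_{12}=a\mu^{X}_{12}$ and $L^{Z}_{12}=L^{X}_{12}$ yields the claimed formula for $L^{Y}_{12}$. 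The formula for $L^{Y}_{21}$ follows in the same way with $b$, $c_2$ and $\mu_{21}(u_1)$.

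As a cross-check, one can argue directly instead of composing. The same infimum argument shows $Q^{Y}_2(u_2)=bQ^{X}_2(u_2)+c_2$. The conditioning event $\{Y_2>Q^{Y}_2(u_2)\}$ equals $\{X_2>Q^{X}_2(u_2)\}$. Hence $Q^{Y}_{12}=aQ^{X}_{12}+c_1$ and $\mu^{Y}_{12}=a\mu^{X}_{12}+c_1$. Integrating $Q^{Y}_{12}$ over $[0,u_1]$ then gives $a\mu^{X}_{12}L^{X}_{12}+c_1u_1$ in the numerator.

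There is no real obstacle. The only point needing care is the identification of the conditioning events, i.e.\ that the positive affine map $x\mapsto bx+c_2$ is strictly increasing. This is what makes the quantile commute with it and keeps the event $\{X_2>Q_2^X(u_2)\}$ unchanged. That step is already verified in the proofs of the two earlier theorems.
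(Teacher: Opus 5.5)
Your proposal is correct and matches the paper's proof: it sets $Z_1=aX_1$, $Z_2=bX_2$, uses Theorem~\ref{scale_thm} to get $L^{Z}_{12}=L^{X}_{12}$ and $\mu^{Z}_{12}=a\,\mu^{X}_{12}$, then applies Theorem~\ref{Locshift_thm} to $(Z_1,Z_2)$ and substitutes. Your direct cross-check via $Q^{Y}_{12}=aQ^{X}_{12}+c_1$ and the unchanged conditioning event is also valid but not needed.
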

\begin{proof}
We prove the result for $L_{12}$.  Define $Z_1=aX_1,\, Z_2=bX_2.$ By the scale invariance property from Theorem \ref{scale_thm},
\begin{equation}\label{scale_eqn}
    L_{12}^{Z}(u_1,u_2)
=
L_{12}^{X}(u_1,u_2),
\qquad
\mu_{12}^{Z}(u_2)
=
a\,\mu_{12}^{X}(u_2).
\end{equation}
Now $Y_1=Z_1+c_1,\, Y_2=Z_2+c_2 $ and applying Theorem \ref{Locshift_thm} yields
\[
L_{12}^{Y}(u_1,u_2)
=
\frac{\mu_{12}^{Z}(u_2)}
     {\mu_{12}^{Z}(u_2)+c_1}
L_{12}^{Z}(u_1,u_2)
+
\frac{c_1}
     {\mu_{12}^{Z}(u_2)+c_1}
u_1.
\]
Substituting \eqref{scale_eqn} gives
\[
L_{12}^{Y}(u_1,u_2)
=
\frac{a\mu_{12}^{X}(u_2)}
     {a\mu_{12}^{X}(u_2)+c_1}
L_{12}^{X}(u_1,u_2)
+
\frac{c_1}
     {a\mu_{12}^{X}(u_2)+c_1}
u_1.
\]
The proof for $L_{21}$ follows analogously.
\end{proof}
The weights $\frac{a\mu_{12}^X(u_2)}
{a\mu_{12}^X(u_2)+c_1}$ and $\frac{c_1}
{a\mu_{12}^X(u_2)+c_1}$ sum to one, so the 
transformed Lorenz curve is a convex combination 
of the original Lorenz curve and the egalitarian 
line $u_1$, reflecting the equalizing effect of 
the location shift $c_1$. 
\par The classical Gini index is obtained from the area between the Lorenz curve and the egalitarian line and serves as a scalar measure of inequality. Since the proposed VBLS consists of two components, a natural extension is to define component-wise Gini measures associated with each direction of the conditional structure, denoted respectively by $G_{12}$ and $G_{21}$. These preserve the asymmetric information captured by the VBLS and quantify inequality within the joint system from both perspectives. The component $G_{12}$ measures the extent of inequality in $X_1$ through its conditional structure relative to $X_2$, whereas $G_{21}$ quantifies the corresponding inequality in $X_2$ through $X_1$. 
\begin{defi}[\textbf{Bivariate Gini Indices}]
The component-wise Gini indices associated with the VBLS are defined as
\[
G_{12}
=
1-2\int_0^1\int_0^1
L_{12}(u_1,u_2)\,du_1\,du_2,
\]
and
\[
G_{21}
=
1-2\int_0^1\int_0^1
L_{21}(u_1,u_2)\,du_1\,du_2.
\]

\end{defi}

The quantities $G_{12}$ and $G_{21}$ preserve directional information regarding inequality in the joint distribution. Under perfect equality, $L_{12}(u_1,u_2)=u_1,\, G_{12}=0.$
Under independence, $G_{12}=G_1$ and $G_{21}=G_2,$
where $G_1,G_2$ are the marginal Gini indices. In particular, differences between the two quantities indicate asymmetry in the conditional inequality structure. 

\begin{thm}
The component-wise Gini indices satisfy
\[
0\le G_{12},\,G_{21}\le 1.
\]
\end{thm}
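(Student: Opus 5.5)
The plan is to integrate the pointwise bounds of Theorem \ref{bound_thm} over the unit square and translate them into bounds on $G_{12}$. I treat $G_{12}$ only; the argument for $G_{21}$ is identical after interchanging the roles of $u_1$ and $u_2$.

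First I check that the double integral exists. By Theorem \ref{bound_thm}, $0\le L_{12}(u_1,u_2)\le u_1\le 1$ on $[0,1)^2$, so $L_{12}$ is bounded. It is also measurable. For each fixed $u_2$ it is continuous in $u_1$, being an integral of a nonnegative function. For each fixed $u_1$ it is measurable in $u_2$, since $u_2\mapsto\int_0^{u_1}Q_{12}(p,u_2)\,dp$ is measurable and $\mu_{12}(u_2)>0$. A function that is continuous in one variable and measurable in the other is jointly measurable. Hence the integral is finite, and Fubini/Tonelli lets me integrate in either order.

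Next, integrating the pointwise inequality $0\le L_{12}(u_1,u_2)\le u_1$ gives
\[
0\le \int_0^1\int_0^1 L_{12}(u_1,u_2)\,du_1\,du_2\le \int_0^1\int_0^1 u_1\,du_1\,du_2=\tfrac12 .
\]
Substituting into the definition $G_{12}=1-2\int_0^1\int_0^1 L_{12}\,du_1\,du_2$:
\begin{itemize}
\item the upper bound $\tfrac12$ on the integral gives $G_{12}\ge 0$;
\item the lower bound $0$ gives $G_{12}\le 1$.
\end{itemize}
This mirrors the univariate fact that the Gini index is twice the area between the Lorenz curve and the egalitarian line, here averaged over $u_2$. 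Indeed,
\[
G_{12}=2\int_0^1\int_0^1\bigl(u_1-L_{12}(u_1,u_2)\bigr)\,du_1\,du_2 ,
\]
whose integrand lies in $[0,u_1]$.

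The only delicate point is the measurability and integrability step, because the paper gives no regularity of $L_{12}$ in $u_2$. I handle it with the separate-continuity and measurability argument above, using the boundedness from Theorem \ref{bound_thm}. Everything else follows directly from that theorem.
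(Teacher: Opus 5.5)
Your proof is correct and is essentially the paper's argument: integrate the pointwise bounds $0\le L_{12}(u_1,u_2)\le u_1$ from Theorem~\ref{bound_thm} over the unit square to get $0\le\int_0^1\int_0^1 L_{12}\,du_1\,du_2\le\tfrac12$, then rearrange the definition of $G_{12}$. Your measurability/Fubini check is extra rigor the paper skips; you assert rather than prove that $u_2\mapsto\int_0^{u_1}Q_{12}(p,u_2)\,dp$ is measurable, but this follows from joint measurability of $Q_{12}$, via $\{Q_{12}(p,u_2)\le x\}=\{F_{12}(x;u_2)\ge p\}$ with $F_{12}(x;\cdot)$ measurable in $u_2$.
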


\begin{proof}
We have from Theorem \ref{bound_thm},
\[
0\le L_{12}(u_1,u_2)\le u_1,
\qquad
0\le L_{21}(u_1,u_2)\le u_2.
\]
Integrating over $[0,1)^2$,
\[
0
\le
\int_0^1\int_0^1
L_{12}(u_1,u_2)\,du_1\,du_2
\le
\int_0^1\int_0^1
u_1\,du_1\,du_2.
\]
 It follows that
\[
0
\le
\int_0^1\int_0^1
L_{12}(u_1,u_2)\,du_1\,du_2
\le
\frac{1}{2}.
\]
Hence,
\[
0
\le
1-
2\int_0^1\int_0^1
L_{12}(u_1,u_2)\,du_1\,du_2
\le
1,
\]
which gives $0\le G_{12}\le1.$
Similarly, $0\le G_{21}\le1.$\\
\end{proof}
\section{Characterizations}
\begin{thm}
Let $(X_1,X_2)$ and $(Y_1,Y_2)$ be non-negative absolutely 
continuous random vectors with finite means. If
\[
L_{12}^{X}(u_1,u_2) = L_{12}^{Y}(u_1,u_2), \quad 
L_{21}^{X}(u_1,u_2) = L_{21}^{Y}(u_1,u_2) 
\qquad \forall\, u_1,u_2\in[0,1),
\]
and
\[
\mu_{12}^{X}(u_2) = \mu_{12}^{Y}(u_2)\; \forall\, u_2\in[0,1), 
\qquad 
\mu_{21}^{X}(u_1) = \mu_{21}^{Y}(u_1)\; \forall\, u_1\in[0,1),
\]
then $(X_1,X_2) \overset{d}{=} (Y_1,Y_2).$
\end{thm}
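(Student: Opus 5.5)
Equality of the two surfaces together with equality of the conditional means gives equality of the conditional quantile functions. These in turn recover the joint survival function. The plan is to proceed componentwise and then to reassemble the joint law through the conditional representation $\bar F(x_1,x_2)=\bar F_2(x_2)\,\bar F_{12}(x_1\mid x_2)$, where $\bar F_{12}(x_1\mid x_2)=P(X_1>x_1\mid X_2>x_2)$.

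\emph{Step 1 (conditional quantiles).} Fix $u_2\in[0,1)$. Multiplying the quantile-based definition \eqref{lorenzquantile_v} by the common mean gives
\[
\int_0^{u_1}Q^X_{12}(p,u_2)\,dp=\mu^X_{12}(u_2)L^X_{12}(u_1,u_2)=\mu^Y_{12}(u_2)L^Y_{12}(u_1,u_2)=\int_0^{u_1}Q^Y_{12}(p,u_2)\,dp
\]
for all $u_1\in[0,1)$. Both integrands are non-decreasing and left-continuous in $p$. By the Lebesgue differentiation theorem they agree for almost every $p$, and by left-continuity they agree everywhere, so $Q^X_{12}(u_1,u_2)=Q^Y_{12}(u_1,u_2)$. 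The same argument applied to $L_{21}$ and $\mu_{21}$ gives $Q^X_{21}=Q^Y_{21}$ on $[0,1)^2$.

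\emph{Step 2 (marginals).} Setting $u_2=0$ and using part (iii) of the boundary theorem, as in its proof, we have $Q_{12}(u_1,0)=Q_1(u_1)$. Hence $Q^X_1=Q^Y_1$, so $X_1\overset{d}{=}Y_1$. Likewise $Q_{21}(0,u_2)=Q_2(u_2)$ gives $X_2\overset{d}{=}Y_2$.

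\emph{Step 3 (joint law).} Since the conditional quantile function determines the conditional distribution, for each $u_2$ we get $F^X_{12}(\cdot;u_2)=F^Y_{12}(\cdot;u_2)$. That is,
\[
P(X_1>x_1\mid X_2>Q_2(u_2))=P(Y_1>x_1\mid Y_2>Q_2(u_2)),
\]
where $Q_2:=Q^X_2=Q^Y_2$ by Step 2. Multiplying by $P(X_2>Q_2(u_2))=1-u_2=P(Y_2>Q_2(u_2))$, valid by absolute continuity, gives
\[
\bar F^X(x_1,Q_2(u_2))=\bar F^Y(x_1,Q_2(u_2))
\]
for all $x_1$ and all $u_2\in[0,1)$.

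\emph{Main obstacle.} The delicate point is that $Q_2(u_2)$, as $u_2$ ranges over $[0,1)$, may not cover every $x_2\ge0$, because of flat regions of $F_2$. For any $x_2$ in the support, choose $u_2=F_2(x_2)$. Absolute continuity gives $Q_2(F_2(x_2))\le x_2$, with $P(Q_2(F_2(x_2))<X_2\le x_2)=0$. Hence the events $\{X_2>x_2\}$ and $\{X_2>Q_2(u_2)\}$ differ by a null set, for both vectors since they share $F_2$. The joint survival functions therefore agree at all $(x_1,x_2)$. The joint survival function determines the law, so $(X_1,X_2)\overset{d}{=}(Y_1,Y_2)$.

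Only the $L_{12}$ half is needed for this conclusion. The $L_{21}$ and $\mu_{21}$ hypotheses give the symmetric reconstruction, which can serve as a consistency check.
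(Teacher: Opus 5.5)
Your proof is correct and follows the same route as the paper: recover $Q_{12}$ and $Q_{21}$ by differentiating $\mu L$, read off the marginals at $u_2=0$ and $u_1=0$, then reassemble the joint law. Your Step 3 proves directly the characterization of $F$ by $(Q_{12},Q_2)$ that the paper cites as a black box, including the flat-region issue for $F_2$. Your appeal to left-continuity is also the right one, since $\inf\{x:F(x)\ge u\}$ is left-continuous; the paper's ``right-continuous'' is a slip.

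Your closing remark, however, is false. Step 2 obtains $Q_2^X=Q_2^Y$ from $Q_{21}(0,u_2)=Q_2(u_2)$, i.e.\ from the $L_{21}$ and $\mu_{21}$ hypotheses at $u_1=0$. The $L_{12}$ half cannot supply this, because the conditioning event $\{X_2>Q_2(u_2)\}$ sees $X_2$ only through $F_2(X_2)$.

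For example, take $X_1\perp X_2$ and $Y_1\perp Y_2$, with $X_1,Y_1,X_2$ standard exponential and $Y_2$ exponential with mean $1/2$. Then $L_{12}^X=L_{12}^Y=L_1$ and $\mu_{12}^X=\mu_{12}^Y=1$. By scale invariance even $L_{21}^X=L_{21}^Y$ holds. Yet the joint laws differ, and only $\mu_{21}$ separates the two vectors.

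The correct statement is that the $L_{21}$ and $\mu_{21}$ hypotheses are redundant beyond the slice $u_1=0$.
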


\begin{proof}
We have
\[
L_{12}^{X}(u_1,u_2)
=
\frac{1}{\mu_{12}^{X}(u_2)}
\int_0^{u_1} Q_{12}^{X}(p,u_2)\,dp.
\]
Since $L_{12}^X$ is an integral function, it is absolutely 
continuous in $u_1$. By the Lebesgue differentiation theorem,
\[
\frac{\partial}{\partial u_1}L_{12}^{X}(u_1,u_2)
=
\frac{Q_{12}^{X}(u_1,u_2)}{\mu_{12}^{X}(u_2)}
\quad \text{a.e.}
\]
Hence,
\[
Q_{12}^{X}(u_1,u_2)
=
\mu_{12}^{X}(u_2)\cdot
\frac{\partial}{\partial u_1}L_{12}^{X}(u_1,u_2)
\quad \text{a.e.}
\]
By assumption, $L_{12}^{X} = L_{12}^{Y}$ everywhere on 
$[0,1)^2$ and $\mu_{12}^{X}(u_2) = \mu_{12}^{Y}(u_2)$. 
Hence $Q_{12}^{X}(u_1,u_2) = Q_{12}^{Y}(u_1,u_2)$ almost 
everywhere in $u_1$. Since both are right-continuous quantile 
functions, equality holds everywhere:
\[
Q_{12}^{X}(u_1,u_2)
=
Q_{12}^{Y}(u_1,u_2)
\qquad \forall\, u_1,u_2\in[0,1).
\]
Similarly, from the equality of $L_{21}$ and $\mu_{21}$,
\[
Q_{21}^{X}(u_1,u_2)
=
Q_{21}^{Y}(u_1,u_2)
\qquad \forall\, u_1,u_2\in[0,1).
\]
When $u_2=0$, since $X_2$ is absolutely continuous, 
$F_2(Q_2(0))=1$, so the conditioning event 
$\{X_2> Q_2^X(0)\}$ has probability one. Therefore 
$Q_{12}^{X}(u_1,0) = Q_1^X(u_1)$, the marginal quantile 
of $X_1$. Similarly $Q_{21}^X(0,u_2) = Q_2^X(u_2)$. 
Thus the marginal quantile functions satisfy
\[
Q_1^{X}(u_1)
=Q_{12}^{X}(u_1,0)
=Q_{12}^{Y}(u_1,0)
=Q_1^{Y}(u_1),
\]
\[
Q_2^{X}(u_2)
=Q_{21}^{X}(0,u_2)
=Q_{21}^{Y}(0,u_2)
=Q_2^{Y}(u_2).
\]
Hence $X_1\overset{d}{=}Y_1$ and $X_2\overset{d}{=}Y_2$.
Since the joint distribution is uniquely characterized by 
the bivariate quantile functions $(Q_{12}(u_1,u_2),Q_2(u_2))$, 
or equivalently $(Q_1(u_1),Q_{21}(u_1,u_2))$, and both 
pairs agree for $(X_1,X_2)$ and $(Y_1,Y_2)$, we conclude
\[
(X_1,X_2)\overset{d}{=}(Y_1,Y_2).
\]
\end{proof}
The proof shows that the conditional quantile functions can be recovered from the VBLS through $Q_{12}(u_1,u_2)=\mu_{12}(u_2)\frac{\partial}{\partial u_1}L_{12}(u_1,u_2),$
and $Q_{21}(u_1,u_2)=\mu_{21}(u_1)\frac{\partial}{\partial u_2}L_{21}(u_1,u_2).$
Hence, the VBLS together with the conditional mean functions contains the complete conditional quantile structure of the underlying distribution.

\begin{thm}
Let $(X_1,X_2)$  be a non-negative absolutely continuous random vector  with finite means. Then  $(L_{12}(u_1,u_2),L_{21}(u_1,u_2))=(u_1^{\alpha_1},u_2^{\alpha_2})$ and $\mu_{12}(u_2)= m_1,\,\mu_{21}(u_1)= m_2,$ where $\alpha_1,\alpha_2>1$ and $m_1,m_2>0$ if and only if  $(X_1,X_2)$ has the independent bivariate power distribution with joint distribution
\[
F(x_1,x_2)=\left(\frac{x_1}{\alpha_1 m_1}\right)^{\frac{1}{\alpha_1-1}}\left(\frac{x_2}{\alpha_2 m_2}\right)^{\frac{1}{\alpha_2-1}}, \; 0\le x_1\le \alpha_1 m_1, \; 0\le x_2\le \alpha_2 m_2.
\]
\end{thm}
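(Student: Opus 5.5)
We prove both directions using the inversion formula from the preceding characterization theorem,
\[
Q_{12}(u_1,u_2)=\mu_{12}(u_2)\frac{\partial}{\partial u_1}L_{12}(u_1,u_2),\qquad Q_{21}(u_1,u_2)=\mu_{21}(u_1)\frac{\partial}{\partial u_2}L_{21}(u_1,u_2).
\]

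For the \emph{only if} direction, substitute $L_{12}=u_1^{\alpha_1}$ and $\mu_{12}\equiv m_1$. This gives
\[
Q_{12}(u_1,u_2)=\alpha_1 m_1 u_1^{\alpha_1-1},
\]
and in the same way $Q_{21}(u_1,u_2)=\alpha_2 m_2 u_2^{\alpha_2-1}$. Neither expression depends on the conditioning level.

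The conditioning level $0$ corresponds to the whole sample space, as in part (iii) of the boundary-condition theorem. Hence
\[
Q_1(u_1)=Q_{12}(u_1,0)=\alpha_1 m_1u_1^{\alpha_1-1},\qquad Q_2(u_2)=\alpha_2 m_2u_2^{\alpha_2-1}.
\]
Inverting gives $F_i(x_i)=(x_i/(\alpha_i m_i))^{1/(\alpha_i-1)}$ on $[0,\alpha_i m_i]$. This is a valid, strictly increasing quantile because $\alpha_i>1$.

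Next we show independence. Since $Q_{21}(u_1,\cdot)=Q_2(\cdot)$ for every $u_1$, we have $P(X_2\le x_2\mid X_1>Q_1(u_1))=F_2(x_2)$ for all $u_1\in[0,1)$. Because $Q_1$ is continuous and onto $[0,\alpha_1m_1)$, this reads $P(X_1>x_1,X_2\le x_2)=\bar F_1(x_1)F_2(x_2)$ for all $x_1,x_2$. Then
\[
F(x_1,x_2)=F_2(x_2)-P(X_1>x_1,X_2\le x_2)=F_1(x_1)F_2(x_2).
\]
This yields the stated product form. Alternatively, we can invoke the fact used in the previous theorem that $(Q_1,Q_{21})$ determines the joint law, and note that the independent model has these same functions.

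For the \emph{if} direction, suppose $(X_1,X_2)$ has the stated product distribution. Then $X_1,X_2$ are independent, and Theorem \ref{independence_thm} gives $L_{12}(u_1,u_2)=L_1(u_1)$, $L_{21}(u_1,u_2)=L_2(u_2)$, $\mu_{12}(u_2)=\mu_1$ and $\mu_{21}(u_1)=\mu_2$. A direct computation with $Q_1(p)=\alpha_1m_1p^{\alpha_1-1}$ gives $\mu_1=\int_0^1 Q_1=m_1$ and $L_1(u_1)=m_1^{-1}\int_0^{u_1}\alpha_1m_1p^{\alpha_1-1}dp=u_1^{\alpha_1}$. The second coordinate is handled the same way.

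The main obstacle is the independence step: deducing full joint independence from the fact that the conditional quantile does not depend on the conditioning level. This requires that $Q_1$ sweep out all thresholds $x_1$, which holds because $Q_1$ is continuous and increasing onto the support. The remaining steps are routine differentiation and integration.
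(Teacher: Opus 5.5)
Your proof is correct and follows the same route as the paper. Both recover $Q_{12}$ and $Q_{21}$ from $Q=\mu\,\partial L$, observe that these are power-type quantiles that do not depend on the conditioning level, and check the converse by direct integration under independence. The one real difference is the independence step. The paper simply cites Theorem \ref{independence_thm}, which only says that independence implies $L_{12}=L_1$ and $L_{21}=L_2$, the opposite of the implication needed. Your identity $P(X_1>x_1,X_2\le x_2)=\bar F_1(x_1)F_2(x_2)$, obtained by letting $Q_1$ sweep $[0,\alpha_1 m_1)$, actually proves independence, so your version closes a gap in the paper's argument.
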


\begin{proof}
Let  $(L_{12}(u_1,u_2),L_{21}(u_1,u_2))=(u_1^{\alpha_1},u_2^{\alpha_2})$ and $\mu_{12}(u_2)= m_1,\,\mu_{21}(u_1)= m_2,$ then  
\[\begin{aligned}
   Q_{12}(u_1,u_2)&=
\mu_{12}(u_2)
\frac{\partial}{\partial u_1}
L_{12}(u_1,u_2),\\
&=m_1\frac{\partial}{\partial u_1}(u_1^{\alpha_1})
=
\alpha_1 m_1u_1^{\alpha_1-1}.
\end{aligned}\]
Similarly, $Q_{21}(u_1,u_2)
=
\alpha_2 m_2u_2^{\alpha_2-1}.$

Thus the conditional quantile functions are independent of the conditioning levels $u_2$ and $u_1$, respectively. Hence by Theorem \ref{independence_thm}, $Q_1(u_1)=\alpha_1 m_1u_1^{\alpha_1-1}, \text{ and } Q_2(u_2)=\alpha_2 m_2u_2^{\alpha_2-1}.$
The corresponding bivariate distribution function is bivariate power with marginals $F_i(x_i)
=
\left(\frac{x_i}{\alpha_i m_i}\right)^{\frac{1}{\alpha_i-1}}, \; 0\le x_i\le \alpha_i m_i, \; i=1,2.$ 

Conversely, if $(X_1,X_2)$ has independent bivariate power distribution with marginals $F_i(x_i)
=
\left(\frac{x_i}{\alpha_i m_i}\right)^{\frac{1}{\alpha_i-1}}, \; 0\le x_i\le \alpha_i m_i, \; i=1,2.$, then
\[
Q_i(u_i)=\alpha_i m_i u_i^{\alpha_i-1},
\qquad i=1,2.
\]
Since $X_1$ and $X_2$ are independent, $Q_{12}(u_1,u_2)=Q_1(u_1),\,Q_{21}(u_1,u_2)=Q_2(u_2),$
and $\mu_{12}(u_2)=m_1,\,
\mu_{21}(u_1)=m_2.$

Hence,
\[
L_{12}(u_1,u_2)
=
\frac1{m_1}\int_0^{u_1}\alpha_1m_1p^{\alpha_1-1}\,dp
=
u_1^{\alpha_1},
\]
and similarly,
\[
L_{21}(u_1,u_2)=u_2^{\alpha_2}.
\]
This completes the proof.
\end{proof}
\begin{thm}
Let $(X_1,X_2)$ be a non-negative absolutely continuous random vector  with finite means. Then $(L_{12}(u_1,u_2),L_{21}(u_1,u_2))=(1-(1-u_1)^{\gamma_1},1-(1-u_2)^{\gamma_2})$ and  $\mu_{12}(u_2)=m_1,\,\mu_{21}(u_1)=m_2,$ where \(0<\gamma_1,\gamma_2<1\) and $m_1,m_2>0$ if and only if $(X_1,X_2)$ has the independent bivariate Pareto distribution with joint survival function
\[
\bar F(x_1,x_2)
=
\left((\frac{m_1\gamma_1}{x_1})^{\frac{1}{1-\gamma_1}}\right)
\left((\frac{m_2\gamma_2}{x_2})^{\frac{1}{1-\gamma_2}}\right), \, x_1\ge m_1\gamma_1,\,
x_2\ge m_2\gamma_2.
\]
\end{thm}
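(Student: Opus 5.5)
The plan follows the proof of the preceding characterization (the power case): recover the conditional quantile functions from the VBLS, observe that they do not depend on the conditioning level, deduce independence, and identify the marginals.

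\emph{Necessity.} Suppose $L_{12}(u_1,u_2)=1-(1-u_1)^{\gamma_1}$ and $\mu_{12}(u_2)=m_1$. I will use the recovery formula $Q_{12}(u_1,u_2)=\mu_{12}(u_2)\,\partial L_{12}/\partial u_1$ from the proof of the first characterization theorem. This gives
\[
Q_{12}(u_1,u_2)=m_1\gamma_1(1-u_1)^{\gamma_1-1},
\]
and similarly $Q_{21}(u_1,u_2)=m_2\gamma_2(1-u_2)^{\gamma_2-1}$. Neither expression depends on the conditioning variable.

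Setting $u_2=0$, where the conditioning event has probability one, gives $Q_1(u_1)=m_1\gamma_1(1-u_1)^{\gamma_1-1}$. Likewise $Q_2(u_2)=m_2\gamma_2(1-u_2)^{\gamma_2-1}$.

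Inverting $x_1=m_1\gamma_1(1-u_1)^{-(1-\gamma_1)}$ yields
\[
\bar F_1(x_1)=\left(\frac{m_1\gamma_1}{x_1}\right)^{\frac{1}{1-\gamma_1}},\qquad x_1\ge m_1\gamma_1,
\]
which is a Pareto law. Its tail index $1/(1-\gamma_1)>1$ because $0<\gamma_1<1$, so the mean is finite. The same computation gives $\bar F_2$.

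\emph{Independence step (main obstacle).} The previous proof appealed to Theorem \ref{independence_thm}, but that theorem is stated in the wrong direction for this purpose. I will instead argue directly. Since $Q_{21}(u_1,\cdot)$ does not depend on $u_1$, the conditional law of $X_2$ given $X_1>Q_1(u_1)$ does not depend on $u_1$. It therefore equals its value at $u_1=0$, which is the marginal law $F_2$. Hence
\[
P(X_1>Q_1(u_1),\,X_2>x_2)=(1-u_1)\bar F_2(x_2).
\]
As $u_1$ ranges over $[0,1)$, $Q_1(u_1)$ covers the whole support of $X_1$ because $X_1$ is continuous. This gives $\bar F(x_1,x_2)=\bar F_1(x_1)\bar F_2(x_2)$ for all $x_1,x_2$, which is independence. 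Equivalently, the joint law is determined by the pair $(Q_1,Q_{21})$, and that pair coincides with the pair of an independent vector having the same marginals. Either way the product form in the statement follows.

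\emph{Sufficiency.} Assume independence with these Pareto marginals. Theorem \ref{independence_thm} gives $Q_{12}=Q_1=m_1\gamma_1(1-u_1)^{\gamma_1-1}$. Then
\[
\mu_{12}(u_2)=\int_0^1 m_1\gamma_1(1-p)^{\gamma_1-1}\,dp=m_1,
\]
which is finite since $\gamma_1>0$. Consequently
\[
L_{12}(u_1,u_2)=\frac{1}{m_1}\int_0^{u_1}m_1\gamma_1(1-p)^{\gamma_1-1}\,dp=1-(1-u_1)^{\gamma_1}.
\]
The components $\mu_{21}(u_1)=m_2$ and $L_{21}(u_1,u_2)=1-(1-u_2)^{\gamma_2}$ follow symmetrically. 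Apart from the independence step, the rest is routine calculus.
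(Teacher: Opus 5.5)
Your proposal is correct and follows the paper's proof. Both recover $Q_{12}$ and $Q_{21}$ as $\mu\,\partial L$, observe that they do not depend on the conditioning level, conclude independence with Pareto marginals, and check the converse by direct integration. The one real difference is the independence step: the paper only cites Theorem~\ref{independence_thm}, which proves the reverse implication, whereas your identity $P(X_1>Q_1(u_1),\,X_2>x_2)=(1-u_1)\bar F_2(x_2)$ actually establishes independence. To make that step airtight, note that the explicit range of $Q_1$ is $[m_1\gamma_1,\infty)$ and $\bar F_1=1$ below it, rather than appealing to continuity of $X_1$ alone.
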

\begin{proof}
Let $L_{12}(u_1,u_2)=1-(1-u_1)^{\gamma_1}$ and  $\mu_{12}(u_2)=m_1,$ we have
$$
Q_{12}(u_1,u_2)
=
\mu_{12}(u_2)
\frac{\partial}{\partial u_1}
L_{12}(u_1,u_2)
=
m_1\gamma_1(1-u_1)^{\gamma_1-1}.
$$
Similarly, $Q_{21}(u_1,u_2)= m_2\gamma_2(1-u_2)^{\gamma_2-1}.$

Since neither depends on the conditioning level, independence follows exactly as in Theorem \ref{independence_thm} and hence $Q_i(u_i)=m_i\gamma_i(1-u_i)^{\gamma_i-1},\, i=1,2$. Solving for $u_i$ by setting  $x_i=Q_i(u_i)$, $$\bar F_i(x_i)=
\left(
\frac{m_i\gamma_i}{x_i}
\right)^{\frac{1}{1-\gamma_i}},\,x_i\ge m_i \gamma_i, \, i=1,2,$$ which is the Pareto distribution function.  Hence the bivariate distribution corresponds to independent bivariate  Pareto distribution.

Conversely, let\[
\bar F(x_1,x_2)
=
\left((\frac{m_1\gamma_1}{x_1})^{\frac{1}{1-\gamma_1}}\right)
\left((\frac{m_2\gamma_2}{x_2})^{\frac{1}{1-\gamma_2}}\right), \, x_1\ge m_1\gamma_1,\,
x_2\ge m_2\gamma_2.
\]
Since $X_1$ and $X_2$ are independent, we have $\bar F_i(x_i)=
\left(
\frac{m_i\gamma_i}{x_i}
\right)^{\frac{1}{1-\gamma_i}}$ and corresponding quantile $Q_i(u_i)=m_i\gamma_i(1-u_i)^{\gamma_i-1},\, i=1,2$. Then $\mu_{12}(u_2)=m_1,\text{ and }
\mu_{21}(u_1)=m_2.$ Hence \[
L_{12}(u_1,u_2)
=
\frac1{m_1}\int_0^{u_1}m_1\gamma_1(1-p)^{\gamma_1-1}\,dp
=
1-(1-u_1)^{\gamma_1},
\]
and similarly,
\[
L_{21}(u_1,u_2)=1-(1-u_2)^{\gamma_2}.
\] Hence the result follows.
\end{proof}
\begin{thm}
Let $(X_1,X_2)$ be a non-negative absolutely continuous random vector. Suppose that $(L_{12}(u_1,u_2),L_{21}(u_1,u_2))=(u_1+(1-u_1)\log(1-u_1),u_2+(1-u_2)\log(1-u_2))$, $\mu_{12}(u_2)
=\frac{{\lambda}_2}{{\lambda}_1 {\lambda}_2 - \ln(1 - u_2) \theta}$ and $\mu_{21}(u_1)=\frac{{\lambda}_1}{{\lambda}_2 {\lambda}_1 - \ln(1 - u_1) \theta},$ where $\lambda_1,\lambda_2>0$ and $0\le\theta\le\lambda_1\lambda_2$.  Then
\[Q_{12}(u_1,u_2)=\frac{-\lambda_2\log(1-u_1)}{\lambda_1\lambda_2-\theta\log(1-u_2)}\quad 
Q_{21}(u_1,u_2)=\frac{-\lambda_1\log(1-u_2)}{\lambda_2\lambda_1-\theta\log(1-u_1)},
\]
and consequently $(X_1,X_2)$ follows Gumbel's bivariate exponential distribution.
\end{thm}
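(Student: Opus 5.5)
The argument will follow the recovery formula established after the first characterization theorem, $Q_{12}(u_1,u_2)=\mu_{12}(u_2)\,\partial L_{12}/\partial u_1$ and $Q_{21}(u_1,u_2)=\mu_{21}(u_1)\,\partial L_{21}/\partial u_2$. It then identifies the resulting bivariate quantile functions with those of Gumbel's bivariate exponential law.

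\textbf{Step 1: recover the conditional quantiles.} Differentiating $L_{12}(u_1,u_2)=u_1+(1-u_1)\log(1-u_1)$ in $u_1$ gives
\[
\frac{\partial L_{12}}{\partial u_1}=1-\log(1-u_1)-1=-\log(1-u_1).
\]
Multiplying by the given $\mu_{12}(u_2)=\lambda_2/(\lambda_1\lambda_2-\theta\log(1-u_2))$ yields the stated $Q_{12}$. The computation for $Q_{21}$ is symmetric. As a sanity check, $\int_0^1-\log(1-p)\,dp=1$, so $\int_0^1 Q_{12}(p,u_2)\,dp=\mu_{12}(u_2)$. Hence the data are consistent and $L_{12}(\cdot,u_2)$ is a legitimate conditional Lorenz curve.

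\textbf{Step 2: extract the marginals.} Setting $u_2=0$ and using part (iii) of the boundary theorem, $Q_{12}(u_1,0)=Q_1(u_1)$, we get $Q_1(u_1)=-\log(1-u_1)/\lambda_1$, an exponential marginal with rate $\lambda_1$. Likewise $Q_2(u_2)=-\log(1-u_2)/\lambda_2$.

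\textbf{Step 3: identify the joint law.} Gumbel's type I bivariate exponential in the usual form is
\[
\bar F(x_1,x_2)=\exp(-\lambda_1x_1-\lambda_2x_2-\theta x_1x_2),
\]
where $\theta$ is scaled so that $0\le\theta\le\lambda_1\lambda_2$. For this law I compute
\[
\bar F_{12}(x_1;u_2)=\frac{\bar F(x_1,Q_2(u_2))}{1-u_2}=\exp\bigl(-x_1(\lambda_1+\theta Q_2(u_2))\bigr),
\]
so the conditional law is exponential with rate $\lambda_1-\theta\log(1-u_2)/\lambda_2$. Inverting gives $Q_{12}(u_1,u_2)=-\lambda_2\log(1-u_1)/(\lambda_1\lambda_2-\theta\log(1-u_2))$, which matches Step 1, and similarly for $Q_{21}$. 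Since the pair $(Q_{12},Q_2)$, or equivalently $(Q_1,Q_{21})$, determines the joint distribution (as used in the first characterization theorem), $(X_1,X_2)$ is Gumbel's bivariate exponential. Alternatively, I can invoke that characterization theorem directly: the VBLS and both conditional mean functions of $(X_1,X_2)$ coincide with those of the Gumbel vector, so the two vectors are equal in distribution.

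\textbf{Main obstacle.} The delicate point is the parametrization. I must check that the constraint $0\le\theta\le\lambda_1\lambda_2$ in the statement corresponds to a valid Gumbel survival function in the scaling I choose. With the form above, the usual validity condition is $0\le\theta\le\lambda_1\lambda_2$, so this works. I must also confirm that the exceedance-conditioning convention $X_2>Q_2(u_2)$, rather than conditioning on $X_2=x_2$, is what produces the displayed rate. Step 3's explicit computation of $\bar F_{12}$ settles both points.
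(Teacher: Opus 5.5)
Your proposal is correct and follows the paper's proof. You differentiate $L_{12}$ to get $-\log(1-u_1)$, multiply by $\mu_{12}(u_2)$ to obtain $Q_{12}$, read off the exponential marginals from $Q_{12}(u_1,0)$ and $Q_{21}(0,u_2)$, and identify the Gumbel law. The only difference is that you run the last step forward: you compute $Q_{12}$ from $\exp(-\lambda_1x_1-\lambda_2x_2-\theta x_1x_2)$ and invoke uniqueness of $(Q_{12},Q_2)$ (or the first characterization theorem). The paper instead inverts $Q_{12}$ at $x_2=Q_2(u_2)$ to get $\bar F_{12}(x_1\mid x_2)=\exp(-(\lambda_1+\theta x_2)x_1)$ and multiplies by the marginal survival function, but both rest on the fact that the conditional and marginal quantiles determine the joint law.
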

\begin{proof}
Since $L_{12}(u_1,u_2) =u_1+(1-u_1)\log(1-u_1),$ we have
\[
\frac{\partial}{\partial u_1}
L_{12}(u_1,u_2)=-\log(1-u_1).
\]
Also $\mu_{12}(u_2)
=\frac{{\lambda}_2}{{\lambda}_1 {\lambda}_2 - \ln(1 - u_2) \theta}$.  Therefore,
\[
Q_{12}(u_1,u_2)=
\frac{-\lambda_2\log(1-u_1)}{\lambda_1\lambda_2-\theta\log(1-u_2)}.
\]
The marginal quantile function can be recovered from $Q_{12}(u_1,u_2)$ as $Q_{12}(u_1,0)= Q_1(u_1)=\frac{-\log(1-u_1)}{\lambda_1}$
$\bar{F}_1(x_1)=\exp({-\lambda_1x_1})$, $\bar{F}_{12}(x_1|x_2)=\exp{(-(\lambda_1+\theta x_2)x_1)}$ and  hence $$\bar{F}(x_1,x_2)=\exp({-\lambda_1x_1-\lambda_2x_2-\theta x_1x_2}),\,x_1,x_2\ge 0$$ which is  Gumbel's bivariate exponential distribution.
\end{proof}
\section{Nonparametric Estimation of the VBLS}
Let $(X_{1i},X_{2i})_{i=1}^n$ be an i.i.d. sample from a non-negative absolutely continuous bivariate distribution with finite first moments. In this section, we develop a nonparametric estimator for the VBLS, $\mathbf{L}(u_1,u_2)$.
\par Fix $u_2\in(0,1)$. Let $X_{2(1)}\le X_{2(2)}\le\cdots\le X_{2(n)}$ be the order statistics of \(X_2\). The empirical marginal quantile is defined by $\hat Q_2(u_2)=X_{2(\lceil nu_2\rceil)}$. Define the  subsample $S_{u_2}=\{i:X_{2i}> \hat Q_2(u_2)\},$ and let $n_{u_2}$ be its sample size. Let the order statistics of $\{X_{1i}:i\in S_{u_2}\}$ be $X_{1(1)}^{u_2}\le X_{1(2)}^{u_2} \le \cdots \le X_{1(n_{u_2})}^{u_2}.$ The empirical conditional quantile function of \(X_1\) given
\(X_2> \hat Q_2(u_2)\) is defined by $\hat Q_{12}(p,u_2) = X_{1(\lceil n_{u_2}p\rceil)}^{u_2},
\,0<p<1.$ The empirical conditional mean is estimated by $\hat\mu_{12}(u_2)
=
\frac1{n_{u_2}}
\sum_{i\in S_{u_2}}
X_{1i}.$ Since the empirical quantile function is stepwise constant, let $k=
\lfloor n_{u_2}u_1\rfloor, \; 0<u_1<1$.  Then
\[
\hat L_{12}(u_1,u_2)
=
\frac1{\hat\mu_{12}(u_2)}
\left[
\frac1{n_{u_2}}
\sum_{i=1}^{k}
X_{1(i)}^{u_2}
+
\left(
u_1-\frac{k}{n_{u_2}}
\right)
X_{1(k+1)}^{u_2}
\right].
\]
The estimator \(\hat L_{21}(u_1,u_2)\) is obtained analogously by interchanging the roles of \(X_1\) and \(X_2\). Specifically, for fixed \(u_1\in(0,1)\), compute the empirical quantile \(\hat Q_1(u_1)\), extract the subsample $\{X_{2i}:X_{1i}>\hat Q_1(u_1)\}$ construct the corresponding empirical conditional quantile and conditional mean, and proceed as above.

\par The proposed nonparametric estimator of the VBLS is $$\mathbf{\hat L}(u_1,u_2)=\Big(\hat L_{12}(u_1,u_2),\hat L_{21}(u_1,u_2)\Big), \; u_1,u_2\in(0,1).$$
The estimator preserves the conditional structure inherent in the VBLS and naturally extends the classical empirical Lorenz curve to the vector-valued bivariate setting. 
\begin{thm}\label{VBLS_consistency}
Let $\{(X_{1i},X_{2i})\}_{i=1}^n$ be an i.i.d. sample from a bivariate distribution with absolutely continuous distribution function $F$. Assume that $E|X_1|<\infty$, $E|X_2|<\infty$, and $\mu_{12}(u_2)>0,\,
\mu_{21}(u_1)>0,$ for all $(u_1,u_2)\in (0,1)^2$.
Then the nonparametric estimator $\mathbf{\hat L}(u_1,u_2)$  converges almost surely to the VBLS $\mathbf{L}(u_1,u_2)$. Equivalently, $\hat L_{12}(u_1,u_2)
\xrightarrow{a.s.}
L_{12}(u_1,u_2),$ and $\hat L_{21}(u_1,u_2)
\xrightarrow{a.s.}
L_{21}(u_1,u_2).$
\end{thm}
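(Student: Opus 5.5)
Fix $(u_1,u_2)\in(0,1)^2$ and treat only $\hat L_{12}$; the argument for $\hat L_{21}$ is the same with the roles of $X_1$ and $X_2$ interchanged. The plan is to write the numerator and the denominator of $\hat L_{12}$ as integrals against an empirical conditional measure, and to show that each converges almost surely by the strong law of large numbers. The ratio then converges by the continuous mapping theorem, because $\mu_{12}(u_2)>0$.

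\textbf{Step 1: conditional averages.} Let $t=Q_2(u_2)$, so that $P(X_2>t)=1-u_2>0$, and write $\hat t=\hat Q_2(u_2)$. Since $F_2$ is continuous, the empirical quantile satisfies $\hat t\to t$ almost surely. This follows from Glivenko--Cantelli, provided $t$ is the unique $u_2$-quantile; if $F_2$ is flat at level $u_2$, then the event $\{X_2>\hat t\}$ differs from $\{X_2>t\}$ only on a null set asymptotically. For any measurable $g$ with $E|g(X_1)|<\infty$, I will show
\[
\frac1n\sum_{i=1}^n g(X_{1i})\mathbf 1\{X_{2i}>\hat t\}\xrightarrow{a.s.} E\big[g(X_1)\mathbf 1\{X_2>t\}\big].
\]
The approach is a sandwich. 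For $\delta>0$, eventually $t-\delta<\hat t<t+\delta$. For nonnegative $g$ the sum is then bounded between the corresponding sums at $t\pm\delta$. Each of those converges almost surely by the SLLN, using a countable set of $\delta$'s. Letting $\delta\downarrow0$ and using continuity of $F_2$ together with dominated convergence closes the sandwich. For general $g$, split $g=g^+-g^-$. Applying this with $g\equiv1$ gives $n_{u_2}/n\to 1-u_2$, and with $g(x)=x$ gives $\hat\mu_{12}(u_2)\to\mu_{12}(u_2)$ almost surely.

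\textbf{Step 2: empirical conditional quantile and the numerator.} Applying Step 1 with $g=\mathbf 1\{x_1\le x\}$, for $x$ in a countable dense set, shows that the empirical conditional cdf $\hat F_{12}(\cdot;u_2)$ converges pointwise almost surely to $F_{12}(\cdot;u_2)$. The limit is continuous, so a Pólya-type argument upgrades this to uniform convergence. Hence $\hat Q_{12}(p,u_2)\to Q_{12}(p,u_2)$ almost surely at every continuity point $p$, that is, for almost every $p$. The numerator equals $\int_0^{u_1}\hat Q_{12}(p,u_2)\,dp$, up to the interpolation term, which is at most $n_{u_2}^{-1}X^{u_2}_{1(k+1)}$. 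To pass the limit inside the integral I will use the identity
\[
\int_0^{u_1}\hat Q_{12}(p,u_2)\,dp=\frac1{n_{u_2}}\sum_{i\in S_{u_2}}X_{1i}\mathbf 1\{X_{1i}\le \hat Q_{12}(u_1,u_2)\}+O\!\left(\frac{\hat Q_{12}(u_1,u_2)}{n_{u_2}}\right),
\]
and then sandwich $\hat Q_{12}(u_1,u_2)$ between $Q_{12}(u_1,u_2)\pm\delta$ as in Step 1. Alternatively, one can argue by uniform integrability: the tail $\int_{u_1}^{1}$ is controlled because the full integral converges to $\mu_{12}$ by Step 1. Both routes give convergence to $\int_0^{u_1}Q_{12}(p,u_2)\,dp$.

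\textbf{Step 3: conclusion.} Divide the limit of the numerator by the limit $\mu_{12}(u_2)>0$ of the denominator.

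The main obstacle is Step 1, because the conditioning threshold $\hat t$ is random and data-dependent. Consequently the summands are not i.i.d. and the SLLN does not apply directly. The sandwich argument, which relies on continuity of $F_2$ at $t$, is the step I expect to need the most care. A secondary delicate point is the possible discontinuity of $Q_{12}(\cdot,u_2)$ at $u_1$, which can occur if the conditional distribution has gaps in its support. I plan to handle this with the integrated form in Step 2, since $\int_0^{u_1}Q$ is continuous in $u_1$ even where $Q$ jumps.
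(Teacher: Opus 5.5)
Your proposal is correct and follows essentially the same route as the paper's proof: consistency of $\hat Q_2(u_2)$, almost sure convergence of the empirical conditional distribution function and hence of $\hat Q_{12}(\cdot,u_2)$, passage to the limit in $\int_0^{u_1}\hat Q_{12}(p,u_2)\,dp$, the law of large numbers for $\hat\mu_{12}(u_2)$, and finally the ratio using $\mu_{12}(u_2)>0$. Your sandwich argument for the data-dependent threshold, restriction of quantile convergence to continuity points, and treatment of a possibly flat $F_2$ add rigor the paper's proof omits: the paper applies the law of large numbers directly to summands involving $\hat Q_2(u_2)$, and it infers convergence of $\hat Q_2(u_2)$ from continuity of $F_2$ alone.
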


\begin{proof}

Using the Glivenko--Cantelli theorem, empirical distribution $\hat F_2$ converges uniformly almost surely to $F_2$. Also $F_2$ is continuous, empirical quantiles $\hat Q_2(u_2)
\xrightarrow{a.s.}
Q_2(u_2).$ Consider the empirical conditional distribution function

\[
\hat F_{12}(x_1|u_2)
=
\frac{
\sum_{i=1}^{n}
1\{X_{1i}\le x_1,\,
X_{2i}>\hat Q_2(u_2)\}
}{
\sum_{i=1}^{n}
1\{X_{2i}>\hat Q_2(u_2)\}
}.
\]
By the law of large numbers the numerator converges almost surely to $P(X_1\le x_1,\,
X_2> Q_2(u_2)),$ while the denominator converges almost surely to $P(X_2>Q_2(u_2))=1-u_2>0.$
Hence, $\hat F_{12}(\cdot|u_2)
\xrightarrow{a.s.}
F_{12}(\cdot|u_2).$
Uniform convergence of distribution functions implies convergence of the corresponding conditional quantile functions $\hat Q_{12}(p,u_2)
\xrightarrow{a.s.}
Q_{12}(p,u_2), \,  p\in(0,1)$.  Hence, by the dominated convergence theorem, $\int_0^{u_1}
\hat Q_{12}(p,u_2)\,dp
\xrightarrow{a.s.}
\int_0^{u_1}
Q_{12}(p,u_2)\,dp$.  Further, by the law of large numbers, the empirical conditional mean $\hat\mu_{12}(u_2)\xrightarrow{a.s.}\mu_{12}(u_2)$.  Therefore, by the continuous mapping theorem, $\hat{L}_{12}(u_1,u_2)
\xrightarrow{a.s.} L_{12}(u_1,u_2)$.  The proof for $\hat L_{21}(u_1,u_2)$
follows similarly by interchanging the roles of $X_1$ and $X_2$. Hence, $\mathbf{\hat L}(u_1,u_2)
\xrightarrow{a.s.}
\mathbf{L}(u_1,u_2).$
\end{proof}

\subsection{Simulation Study}
To examine the finite-sample behavior of the proposed nonparametric estimator of the VBLS, a Monte Carlo simulation study was conducted under the bivariate Pareto model introduced in the Example \ref{eg_pareto}. The distribution was chosen due to its ability to capture heavy-tailed characteristics commonly observed in income and wealth data. The estimator was evaluated at the probability combinations $(u_1,u_2)=(0.3,0.3),(0.5,0.5)$, and $(0.7,0.7)$, corresponding to lower, central, and upper regions of the conditional distribution. For each setting, $500$ Monte Carlo replications were generated for sample sizes $n=50,100,200$, and $500$. The resulting estimates of $\hat{L}_{12}(u_1,u_2)$ and $\hat{L}_{21}(u_1,u_2)$ were assessed using absolute bias and mean squared error (MSE). The simulation results are summarized in Table \ref{simulation_VBLS}. 
\begin{table}[ht]
\centering
\renewcommand{\arraystretch}{1.3}
\setlength{\tabcolsep}{10pt}
\caption{Estimates, Absolute Bias and MSE of $(\hat L_{12}, \hat L_{21})$ based on simulated data}
\label{simulation_VBLS}
\resizebox{0.6\linewidth}{!}{%
\begin{tabular}{ccccc}
\toprule
$(u_1,u_2)$ & $n$ & $(\hat L_{12}, \hat L_{21})$ & Bias & MSE\\
\midrule

(0.3,0.3) & 50  & (0.2071, 0.2088) & (0.0027, 0.0042) & (0.0003, 0.0003)\\
          & 100 & (0.2047, 0.2058) & (0.0011, 0.0022) & (0.0002, 0.0002)\\
          & 200 & (0.2047, 0.2054) & (0.0004, 0.0008) & (0.0001, 0.0001)\\
          & 500 & (0.2049, 0.2047) & (0.0005, 0.0004) & (0.0000, 0.0000)\\

\addlinespace

(0.5,0.5) & 50  & (0.3583, 0.3566) & (0.0090, 0.0064) & (0.0014, 0.0013)\\
          & 100 & (0.3526, 0.3537) & (0.0025, 0.0015) & (0.0007, 0.0007)\\
          & 200 & (0.3507, 0.3518) & (0.0027, 0.0020) & (0.0005, 0.0004)\\
          & 500 & (0.3497, 0.3497) & (0.0010, 0.0019) & (0.0002, 0.0002)\\

\addlinespace

(0.7,0.7) & 50  & (0.5303, 0.5322) & (0.0185, 0.0249) & (0.0046, 0.0047)\\
          & 100 & (0.5210, 0.5233) & (0.0106, 0.0175) & (0.0026, 0.0027)\\
          & 200 & (0.5152, 0.5150) & (0.0045, 0.0022) & (0.0014, 0.0016)\\
          & 500 & (0.5111, 0.5118) & (0.0007, 0.0048) & (0.0006, 0.0009)\\

\bottomrule
\end{tabular}}
\end{table}
It can be seen that the estimation error decreases steadily as the sample size increases across all combinations of $(u_1,u_2)$. In particular, both the absolute bias and MSE exhibit a declining pattern, indicating improved estimator accuracy in larger samples. Similar behavior is observed for both components of the VBLS, suggesting stable performance of the proposed estimator and providing empirical support for the consistency result established in Theorem \ref{VBLS_consistency}.
\section{Data analysis}
\subsection{Vietnam household survey data}
To demonstrate the applicability of the proposed VBLS in the analysis of socioeconomic inequality, we consider the \texttt{VietNamH} dataset available in the \texttt{Ecdat} package. The dataset contains information on household expenditures collected from a nationally representative survey of Vietnamese households. The variable $X_1$ represents the total household expenditure  that reflects the overall household consumption and $X_2$ represents the medical expenditure $i.e.$, spending on health related needs. These variables are of particular interest because health expenditures often exhibit substantially greater concentration than general household consumption and may therefore reveal important dimensions of economic inequality.
 
The proposed nonparametric estimator was employed to estimate the  Lorenz surfaces $L_{12}(u_1,u_2)$ and $L_{21}(u_1,u_2)$ over a grid of probability values in the unit square. Selected values of the estimated VBLS are reported in Table \ref{vietnam_VBLS}.

\begin{table}[ht]
\centering
\begin{tabular}{ccc}
\hline
$(u_1,u_2)$ &
$\hat{L}_{12}(u_1,u_2)$ &
$\hat{L}_{21}(u_1,u_2)$ \\
\hline
$(0.25,0.25)$ & 0.0915 & 0.0026 \\
$(0.50,0.50)$ & 0.2508 & 0.0374 \\
$(0.75,0.75)$ & 0.4853 & 0.1563 \\
$(0.90,0.90)$ & 0.7052 & 0.3753 \\
\hline
\end{tabular}
\caption{Selected values of the estimated VBLS for the Vietnam expenditure data}
\label{vietnam_VBLS}
\end{table}
The values increase with the probability levels, reflecting the cumulative nature of the Lorenz framework. Moreover, at every selected point, $L_{12}(u_1,u_2)>L_{21}(u_1,u_2)$, indicating a pronounced asymmetry in the concentration structure of the two variables. The estimated surfaces of $L_{12}(u_1,u_2)$ and $L_{21}(u_1,u_2)$ are displayed in Figures \ref{3d_exp_L12} and \ref{3d_exp_L21} respectively. To facilitate interpretation, diagonal sections of the estimated surfaces were examined by considering $L_{12}(u,u) \text{ and } 
L_{21}(u,u),\, 0\le u\le 1$. The resulting curves are shown in Figure \ref{diag_exp}. Both curves satisfy the standard Lorenz properties. However, the curve corresponding to $L_{12}(u,u)$ lies consistently above that of $L_{21}(u,u)$, demonstrating that cumulative shares of total expenditure accumulate more rapidly than cumulative shares of medical expenditure when equal probability levels are considered. The gap between the two curves widens substantially in the middle and upper portions of the distribution, indicating increasing divergence in concentration patterns as larger expenditure shares are accumulated.
\begin{figure}[ht]
    \centering
    \includegraphics[scale=0.4]{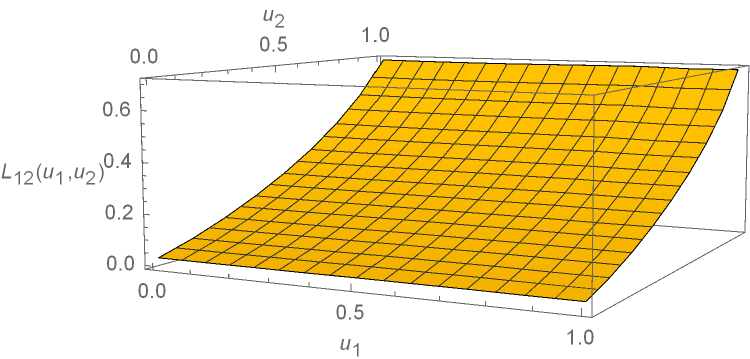}
    \caption{Estimated VBLS surface $\hat{L}_{12}(u_1,u_2)$ for the Vietnam expenditure data}
    \label{3d_exp_L12}
\end{figure}
\begin{figure}[ht]
    \centering
    \includegraphics[scale=0.4]{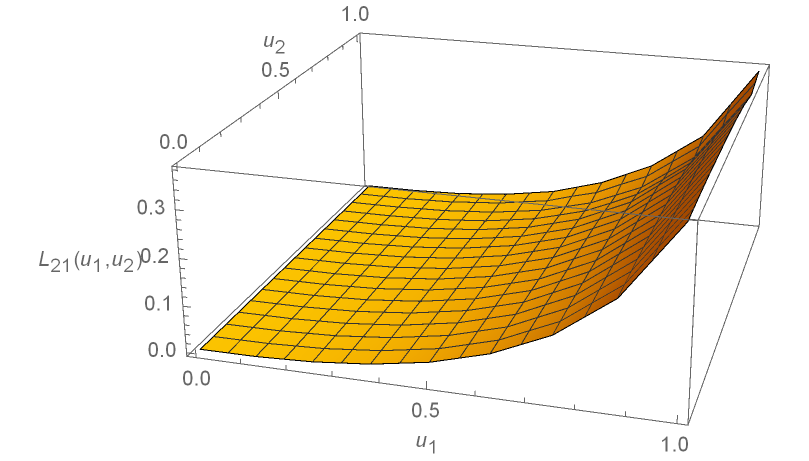}
    \caption{Estimated VBLS surface $\hat{L}_{21}(u_1,u_2)$ for the Vietnam expenditure data }
    \label{3d_exp_L21}
\end{figure}
\begin{figure}[ht]
    \centering
    \includegraphics[scale=0.4]{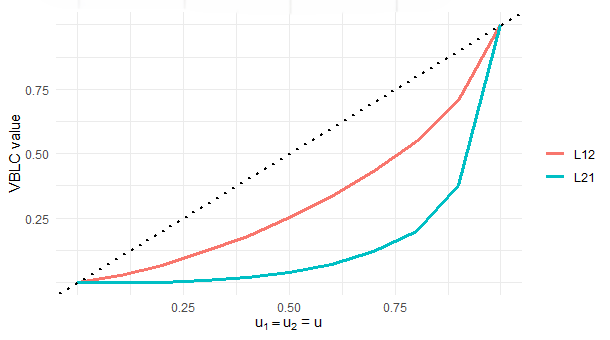}
    \caption{Diagonal comparison of the estimated VBLSs for the Vietnam expenditure data}
    \label{diag_exp}
\end{figure}
Using the proposed double integral formulation, the estimated bivariate Gini indices are $G_{12}=0.5215$ and $G_{21}=0.8488$. Since $G_{12}<G_{21}$, the average level of the surface $L_{12}$ exceeds that of $L_{21}$, indicating that inequality associated with medical expenditure is substantially greater than inequality associated with total household expenditure when the dependence structure between the variables is taken into account. These findings highlight the usefulness of the proposed VBLS as a tool for studying multidimensional inequality and for uncovering directional features that remain hidden under traditional approaches.

\subsection{Workers compensation data}
To further demonstrate the versatility of the proposed VBLS framework, we turn to an application in actuarial science using the Workers Compensation dataset available in the \texttt{insuranceData} package. The dataset contains 847 observations corresponding to workers' compensation insurance contracts. Two variables $X_1$ and $X_2$ represent the premium or amount collected by the insurer from policyholders and the loss amount or the claim payments made by the insurer, respectively.

The surfaces $(L_{12}(u_1,u_2),L_{21}(u_1,u_2))$ over $(u_1,u_2)$ were estimated using the proposed nonparametric estimator . 
Table \ref{actuarial_VBLS} reports selected values of the estimated VBLS.

\begin{table}[ht]
\centering
\caption{Selected values of the estimated VBLS for the Workers Compensation data}
\label{actuarial_VBLS}
\begin{tabular}{ccc}
\hline
$(u_1,u_2)$ & $\hat L_{12}(u_1,u_2)$ & $\hat L_{21}(u_1,u_2)$ \\
\hline
$(0.25,0.25)$ & 0.0146 & 0.0192 \\
$(0.50,0.50)$ & 0.0584 & 0.1055 \\
$(0.75,0.75)$ & 0.3160 & 0.4667 \\
$(0.90,0.90)$ & 0.5997 & 0.7437 \\
\hline
\end{tabular}
\end{table}

The values increase steadily with $u_1$ and $u_2$, reflecting the cumulative nature of the Lorenz framework. The consistent difference $L_{21}(u_1,u_2) > L_{12}(u_1,u_2)$ across all quantile levels reveals an asymmetric concentration pattern. Specifically, losses accumulate faster among high-premium contracts than premiums accumulate among high-loss contracts. This finding suggests that the pricing structure may not fully account for the tail risk, as the most expensive policies generate disproportionately high losses relative to their premium share. The estimated 3D surfaces of $L_{12}(u_1,u_2)$ and $L_{21}(u_1,u_2)$ are shown in Figure \ref{3d_act_L12} and \ref{3d_act_L21}.  The slice plots are also shown in Figure \ref{slice_act_L12} and \ref{slice_act_L21} to clarify the behavior. For fixed values of $u_2$, the curves of $L_{12}(u_1,u_2)$ are increasing and convex, with larger values of $u_2$ producing uniformly higher curves. Similarly, the slice plots of $L_{21}(u_1,u_2)$ reveal increasing concentration as $u_1$ increases. The separation between the curves becomes more pronounced in the upper tail, highlighting the importance of extreme contracts in determining the overall concentration structure.
\begin{figure}[ht]
    \centering
    \includegraphics[scale=0.4]{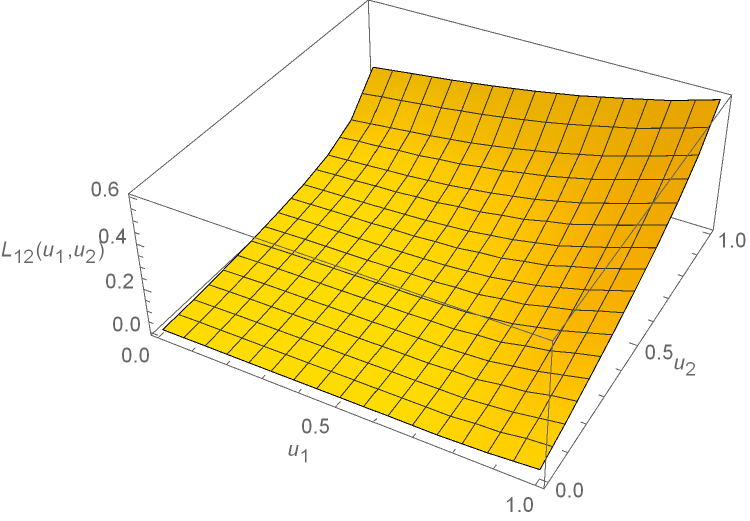}
    \caption{Estimated VBLS surface $\hat{L}_{12}(u_1,u_2)$ for workers compensation data}
    \label{3d_act_L12}
\end{figure}
\begin{figure}[ht]
    \centering
    \includegraphics[scale=0.4]{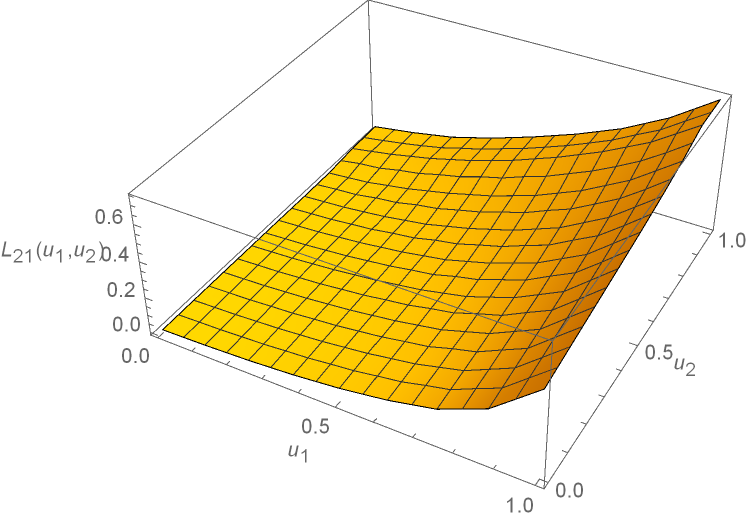}
    \caption{Estimated VBLS surface $\hat{L}_{21}(u_1,u_2)$ for workers compensation data }
    \label{3d_act_L21}
\end{figure}
\begin{figure}[ht]
    \centering
    \includegraphics[scale=0.4]{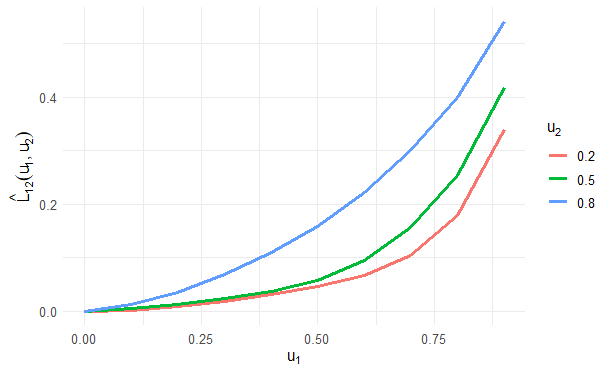}
    \caption{Slice plots of $\hat{L}_{12}(u_1,u_2)$ for workers compensation data }
    \label{slice_act_L12}
\end{figure}
\begin{figure}[ht]
    \centering
    \includegraphics[scale=0.4]{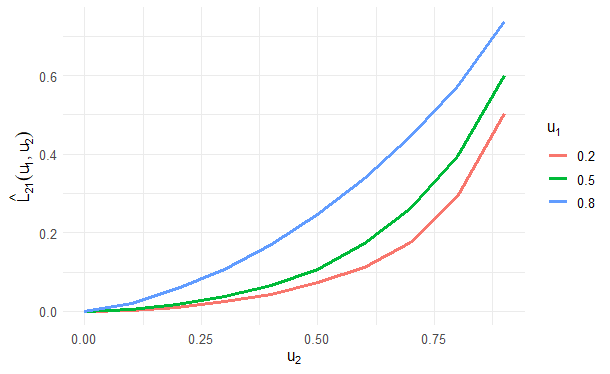}
    \caption{Slice plots of $\hat{L}_{21}(u_1,u_2)$ for workers compensation data }
    \label{slice_act_L21}
\end{figure}
The proposed bivariate Gini indices were estimated as $G_{12}=0.7750 \text{ and }
G_{21}=0.6737$. The large values of both indices indicate substantial concentration in the joint distribution of premiums and losses. From an actuarial perspective, the larger $G_{12}$ compared to $G_{21}$ indicates that premium inequality is more sensitive to tail events than loss inequality. Insurers may need to adjust risk loadings for high premium contracts to better reflect the observed concentration patterns. 

The actuarial application demonstrates that the proposed VBLS framework is capable of revealing directional concentration patterns that cannot be captured by marginal inequality measures alone. While the classical Gini coefficients indicate that both premiums and losses are highly concentrated, the VBLS surfaces and associated bivariate Gini indices provide additional information regarding how this concentration evolves under conditional upper tail regimes. Consequently, the proposed methodology offers a richer characterization of risk concentration within insurance portfolios and complements existing Lorenz curve based tools commonly employed in actuarial analysis.
\section*{Conflict of interest statement}
On behalf of all authors, the corresponding author states that there is no conflict of interest.

\bibliographystyle{apalike}
\bibliography{myref}
\end{document}